\newtheorem{theorem}{Theorem}
\newtheorem{corollary}[theorem]{Corollary}
\newtheorem{lemma}[theorem]{Lemma}
\newtheorem{proposition}[theorem]{Proposition}
\newtheorem{definition}[theorem]{Definition}
\newcommand{\poly}{\mathrm{poly}}
\newcommand{\rndd}[1]{\left\lfloor {#1} \right\rfloor_\ell}
\newcommand{\rndu}[1]{\left\lceil {#1} \right\rceil_\ell}
\newcommand{\cA}{\mathcal{A}}
\newcommand{\cD}{\mathcal{D}}
\DeclareMathOperator*{\bP}{{\mathbb{P}}}
\DeclareMathOperator*{\bE}{\mathbb{E}}
\DeclareMathOperator*{\E}{\mathbb{E}}
\DeclareMathOperator*{\TV}{\mathrm{TV}}
\DeclareMathOperator*{\KL}{\mathrm{KL}}
\DeclareMathOperator*{\var}{\mathrm{Var}}
\DeclareMathOperator*{\IC}{\mathrm{IC}}
\newcommand{\eps}{\epsilon}
\title{On the amortized complexity of approximate counting}
\author{Ishaq Aden-Ali\thanks{UC Berkeley. \texttt{adenali@berkeley.edu}. Supported by ONR DORECG award N00014-17-1-2127.}\and Yanjun Han\thanks{MIT. \texttt{yjhan@mit.edu}. Supported by the Berkeley-Simons Research Fellowship and Norbert Wiener postdoctoral fellowship.}\and Jelani Nelson\thanks{UC Berkeley. \texttt{minilek@berkeley.edu}. Supported by NSF award CCF-1951384, ONR grant N00014-18-1-2562, ONR DORECG award N00014-17-1-2127, and a Google Faculty Research Award.}\and Huacheng Yu\thanks{Princeton University. \texttt{yuhch123@gmail.com}.}}  
\begin{document}
\setcounter{page}{0}

\maketitle

\thispagestyle{empty}

\begin{abstract}
    Naively storing a counter up to value $n$ would require $\Omega(\log n)$ bits of memory. Nelson and Yu~\cite{NelsonY22}, following work of Morris~\cite{Morris78}, showed that if the query answers need only be $(1+\epsilon)$-approximate with probability at least $1 - \delta$, then $O(\log\log n + \log\log(1/\delta) + \log(1/\eps))$ bits suffice, and in fact this bound is tight. Morris' original motivation for studying this problem though, as well as modern applications, require not only maintaining one counter, but rather $k$ counters for $k$ large. This motivates the following question: for $k$ large, can $k$ counters be simultaneously maintained using asymptotically less memory than $k$ times the cost of an individual counter? That is to say, does this problem benefit from an improved {\it amortized} space complexity bound?
    
    We answer this question in the negative. Specifically, we prove a lower bound for nearly the full range of parameters showing that, in terms of memory usage, there is no asymptotic benefit possible via amortization when storing multiple counters.
    Our main proof utilizes a certain notion of ``information cost'' recently introduced by Braverman, Garg and Woodruff~\cite{BravermanGW20} to prove lower bounds for streaming algorithms.
\end{abstract}

\section{Introduction}
Maintaining a counter subject to increments is one of the most basic data structural problems in computer science. If the counter value stays below $n$, then the naive solution of maintaining the counter explicitly consumes $O(\log n)$ bits of memory. In 1978, Morris devised a new {\it approximate counting} algorithm which could maintain such a dynamic counter using much fewer bits in expectation \cite{Morris78}. His algorithm uses a number of random bits which itself is a random variable, whose expectation is $O(\log\log n + \log(1/\eps) + \log(1/\delta))$. At query time, this {\it Morris Counter} returns the correct counter value up to $(1+\eps)$-multiplicative error with failure probability at most $\delta$. Recently, Nelson and Yu \cite{NelsonY22} showed that a slight alteration to Morris' algorithm improves the dependence on $\delta$, consuming only $O(\log\log n + \log(1/\eps) + \log\log(1/\delta))$ bits of memory. They also proved that this bound is asymptotically optimal.

Morris' original motivation for developing his approximate counter in the mid 1970s was, for each of $26^3$ possible trigrams (sequences of three characters), to count the number of occurrences of that trigram amongst words in some text; these counts were then used by \textsf{typo}, an early Unix spellchecker (see more on this history in a survey by Lumbroso \cite{Lumbroso18}). Approximate counters still find use in modern-day applications. For example, Redis is a highly popular in-memory database which is often used as a cache, and one built-in cache replacement policy is Least Frequently Used. To implement LFU, each item stored in cache must have an associated counter, counting the number of queries to that item over some time window. These counters are currently implemented in Redis using a variant of the Morris Counter \cite{Kolevska18}.

Both the original application of Morris as well as modern-day applications of approximate counting, such as that described in Redis, need not store just one counter but many. Even in the 1970s, Morris' computer would have had enough memory to count the number of occurrences of a single trigram, but the difficulty was keeping counts of all $26^3$ counters in memory simultaneously. Similarly, modern-day computers and mobile devices clearly also contain enough memory to store a single counter for any reasonable counter value, but one may wish to minimize the space per counter in an application where many counters must be stored simultaneously, as in the case of Redis being used to cache a large number of items. Thus even the original motivation of Morris inspires the following question:

\medskip

{\it \noindent What is the \textbf{amortized} space complexity of approximate counting? Specifically, for $k$ large, is it possible to maintain $k$ approximate counters using less memory than $k$ times that of an individual approximate counter?}

\medskip

Typically such lower bounds are obtained via a direct sum argument in combination with an information complexity lower bound \cite{ChakrabartiSWY01}. Specifically, a common way to carry out proving such lower bounds in the streaming context is as follows \cite{BarYossefJKS04}: first, devise a communication game that you believe is hard (requires a lot of communication) to solve, and which reduces to the streaming problem (i.e., a low-memory streaming algorithm would imply a low-communication protocol).
Suppose for illustration that this communication game is one-way, with Alice receiving input $X$ who sends a single message $\Pi$ to Bob who holds input $Y$. Here we use distributional complexity, in which $(X,Y)$ is drawn from some distribution $\mathcal D$, and the goal is to devise a protocol in which Bob computes some $f(X,Y)$ with success probability at least $1-\delta$. Now, the so-called {\it external information complexity} of $\Pi$ is defined as $I(\Pi ; X,Y)$, which is at most $|\Pi|$ and thus it suffices to lower bound $I(\Pi; X,Y)$. Once one establishes a lower bound on this quantity, then for a sequence $\{(X_i,Y_i)\}_{i=1}^k$ of $k$ of independently drawn such inputs, we have the following inequality \cite[Theorem 1.5]{ChakrabartiSWY01}: $I(\Pi ; X_1,Y_1,X_2,Y_2,\ldots,X_k,Y_k) \ge \sum_{i=1}^k I(\Pi ; X_i, Y_i)$. Note though that if we want a lower bound for $\Pi$ solving the $k$-fold problem, then we cannot immediately invoke the single-instance lower bound to bound each summand on the right hand side (since this $\Pi$ is a protocol solving the $k$-fold problem, not a single instance!). One then typically completes the argument by showing how to efficiently embed the single-instance communication game into that of $k$ parallel instances, to then lift the single-instance lower bound to a lower bound for the $k$-fold problem as outlined above.

The trouble with obtaining our desired lower bound for approximate counting via the above paradigm is that the memory lower bound of \cite{NelsonY22} was not proven via information complexity, and thus it is not clear how to execute the above direct sum argument. Rather, the lower bound proof there followed an approach similar to the proof of the pumping lemma for regular languages \cite{Sipser97}.

\medskip

\noindent \textbf{Our Contribution:} We answer the above question in the negative for nearly the full range of parameters (for technical reasons in our analysis, we require the restriction $\epsilon < 1/\log\log(1/\delta)$). Specifically, for this range of $\epsilon$ we prove a strong memory lower bound demonstrating that there is no asymptotic benefit from amortization when maintaining multiple approximate counters. Our main approach is to first establish a novel information complexity type lower bound for a single approximate counter, which we then lift to multiple counters using a standard direct sum argument, by constructing an embedding of the single-counter problem into the $k$-fold problem.

\medskip

Below we formally define the problem we solve, then state our main result. 
The approximate counting problem for a single counter is to maintain a multiplicative approximation to a count $N$ (initially 0) that undergoes a sequence of increment operations.
Specifically, the algorithm should support two operations: \texttt{increment}() increments $N$ by $1$, and \texttt{query}() returns
a value $\hat{N}$ such that $\bP(|\hat{N} -N| \ge \eps N) \le \delta$; that is, at any point in time the data structure should be able to provide a $(1+ \eps)$ multiplicative approximation with probability at least $1-\delta$.
Our generalization to approximating $k$ counters is as follows:
\begin{definition}[$(k,\eps,\delta)$-approximate counter]
Let $N_1, \dots, N_k$ be $k$ counters all initially set to $0$.
We say a randomized algorithm $\cA$ is a $(k,\eps,\delta)$-approximate counter if it supports two operations: \texttt{increment}($i$) increments $N_i$ by $1$, and \texttt{query}($i$) returns a value $\hat N_i$ such that
\[
\bP\left(|\hat{N}_i - N_i| \ge \eps N_i \right) \le \delta.
\]
When $k =1$, we will simply call this an $(\eps, \delta)$-approximate counter.
\end{definition}

The following is then our main theorem, when here and henceforth we use $n$ in lower bounds to denote an upper bound on the number of \texttt{increment} operations:

\begin{theorem}
For any $\delta < c_1$ and $\eps < c_2/\log\log(1/\delta)$, if $k < c_3 n$ then after a sequence of at most $n$ updates any $(k,\eps,\delta)$-counter must use $\Omega(k\min\{\log(n/k), \log\log(n/k) + \log(1/\eps) + \log\log(1/\delta)\})$ bits of memory in expectation. Else if $k > c_3 n$, then a lower bound of $\Omega(n\log(2 k/(c_3 n)))$ holds; $c_1, c_2, c_3 \ge 0$ are universal constants. Furthermore, for both ranges of $k$, these lower bounds are tight up to constant factors.
\end{theorem}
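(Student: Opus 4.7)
The overall strategy follows the roadmap sketched in the introduction: first prove an information-complexity analogue of the single-counter lower bound of~\cite{NelsonY22}, then lift it to $k$ counters by a direct sum argument in the style of~\cite{ChakrabartiSWY01}. I would handle two regimes separately: the main regime $k \le c_3 n$, where each counter has a non-trivial amortized update budget of $\Theta(n/k)$, and the sparse regime $k > c_3 n$, where the bound becomes essentially combinatorial.

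For the single-counter bound, the plan is to design a hard distribution $\cD$ on increment sequences for which correctly answering a \texttt{query} forces the data structure's memory to contain $\Omega(\log\log n + \log(1/\eps) + \log\log(1/\delta))$ bits of information about the input. A natural choice is to let the final count $N$ equal $(1+\eps)^J$ for $J$ uniform on $\{0,1,\dots,\Theta(\log_{1+\eps} n)\}$, so that the $(1+\eps)$-approximate answer determines $J$. I would then use the information-cost framework of~\cite{BravermanGW20}, presenting the increments in blocks (block $j$ pushing the count from $(1+\eps)^{j-1}$ to $(1+\eps)^j$) and lower-bounding $\IC(\cA) := I(M_\tau; J)$, where $M_\tau$ is the memory state at an appropriate stopping time. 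A Fano-type argument against the geometrically-spaced hypotheses gives $\Omega(\log\log n + \log(1/\eps))$. The extra $\log\log(1/\delta)$ I expect to obtain by a separate argument tailored to Markov-chain structure, repurposing the combinatorial content of the pumping-style argument of~\cite{NelsonY22} in information-theoretic language and adding it to the previous term.

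For the lift to $k$ counters in the main regime, I would sample $(X_1,\dots,X_k)\sim \cD_{n/k}^{\otimes k}$, where $\cD_{n/k}$ is the single-counter hard distribution rescaled to use at most $n/k$ updates, and interleave the $X_i$'s into a stream of total length $\le n$. Any $(k,\eps,\delta)$-counter $\cA$ induces a transcript $\Pi$ (its memory at query time) with
\[
\bE|\Pi| \ \ge\ I(\Pi; X_1,\dots,X_k) \ \ge\ \sum_{i=1}^k I(\Pi; X_i),
\]
by the standard direct sum inequality~\cite[Thm.~1.5]{ChakrabartiSWY01}. To lower bound each summand by the single-counter information lower bound on $n/k$ updates, I would use an embedding: a single-counter solver holding $X_i$ can simulate $\cA$ by sampling the other $k-1$ coordinates from the publicly known $\cD_{n/k}$ and feeding the interleaved stream, then applying $\cA$'s query on position $i$. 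The outer $\min\{\log(n/k),\cdots\}$ cap in the theorem is trivial, since a single counter with value at most $n/k$ can be stored in $\log(n/k)$ bits.

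For the sparse regime $k > c_3 n$, the lower bound $\Omega(n\log(2k/(c_3 n)))$ is combinatorial: after at most $n$ increments at most $n$ counters are nonzero. On the distribution in which a uniformly random subset $S \subseteq [k]$ of size $n$ each receives a single increment, any correct algorithm must, by a union bound over \texttt{query}($i$) for $i \in [k]$, identify $S$ with high probability (using $\delta < c_1$), and hence its memory carries at least $\log\binom{k}{n} = \Omega(n\log(k/n))$ bits of information. Tightness in both regimes follows by applying the single-counter upper bound of~\cite{NelsonY22} independently per counter in the dense regime and by explicit sparse-support encoding in the sparse regime. The main obstacle I anticipate is the $\log\log(1/\delta)$ term in the single-counter \emph{information} lower bound: the pumping-lemma style proof in~\cite{NelsonY22} controls memory, not mutual information, and rebuilding that part so that it survives the direct sum embedding (under a distribution where many counters can simultaneously be ``hard'', which is why the hypothesis $\eps < c_2/\log\log(1/\delta)$ likely becomes necessary) is the technical heart of the argument.
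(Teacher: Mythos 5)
Your high-level roadmap (single-counter information lower bound, then a direct sum by superadditivity) matches the paper's, but the step you flag as the anticipated obstacle is exactly where the plan breaks down, and your chosen information quantity cannot be made to work. You define the cost as $I(M_\tau;J)$ for a query-time memory state and a hidden index $J$ ranging over $\Theta(\log_{1+\eps}n)$ hypotheses; but then $I(M_\tau;J)\le H(J)=O(\log\log n + \log(1/\eps))$ with no $\delta$-dependence at all, so there is no ``separate argument'' one could bolt on to push this quantity past its hard ceiling. The paper instead uses the Braverman--Garg--Woodruff cost $\IC(\cA,\cD)=\sum_{i}\sum_{j\le i}I(M_i;X_j\mid M_{j-1})$, summing per-transition terms over \emph{all} time steps: the content of the $\Omega(n\log\log(1/\delta))$ bound is not that the final state carries much about the input (it does not), but that the algorithm must spend $\Omega(1)$ bits of attention at $\Omega(n\log\log(1/\delta))$ per-step transitions. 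Realizing this requires the paper's carefully designed multi-scale hard distribution interleaving $L\approx\log_B\log(1/\delta)$ independent scales of $\{0,B^\ell\}$ increments; the paper's own discussion shows that i.i.d.\ $\{0,1\}$ increments are \emph{not} hard for this cost (a block-averaging algorithm achieves low $\IC$), so neither your geometric-spacing distribution nor any low-entropy hidden-index distribution would suffice. The hypothesis $\eps<c_2/\log\log(1/\delta)$ then falls out of that construction: the hidden scale contributes a $\Theta(n/L)$ deviation to a count of size $\Theta(n)$, so one needs $\eps n\lesssim n/L$. Finally, ``repurposing the pumping-lemma argument in information-theoretic language'' is a wish rather than a step---the paper's introduction explicitly observes that the pumping argument needs all updates to be identical and certifies support size, not mutual information, neither of which survives a non-degenerate input distribution.

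Your direct-sum step $I(\Pi;X_1,\dots,X_k)\ge\sum_i I(\Pi;X_i)$ with the random-index embedding is sound in form and mirrors the paper's superadditivity argument for $\IC$, but it inherits whatever single-counter invariant you can actually prove, and as noted above, the query-time mutual information is the wrong invariant. A smaller but real gap: in the sparse regime $k>c_3n$, the union bound over all $k$ queries gives total failure probability $k\delta$, which need not be small when only $\delta<c_1$ is assumed; you would instead need a Gilbert--Varshamov packing of subsets, as in the paper's offline lemma, so that two far-apart codewords remain distinguishable even when a $\delta$ fraction of per-counter guarantees fail. The two-regime decomposition, the trivial $\log(n/k)$ cap, and the tightness via per-counter application of~\cite{NelsonY22} are all correctly identified.
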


\subsection{Preliminaries and notation}
We write $X \sim \cD$ to represent a random variable $X$ sampled from the probability distribution $\cD$.
If the distribution of $X$ has not been explicitly defined, we write $P_X$ to the corresponding probability distribution of $X$.
For two probability distributions $P$ and $Q$ defined on the same domain, we write $\TV(P,Q)=\int |dP-dQ|$ to be their total variation distance and $\KL(P\| Q)=\int dP\log(dP/dQ)$ their Kullback-Leibler divergence.
We frequently make use of Pinsker's inequality $\TV(P, Q) \leq \sqrt{\KL(P\|Q)/2}$.
For random variables $X$ and $Y$, we write $H(X)=\E_{x\sim P_X}[\log(1/P_X(x))]$ to denote the (Shannon) entropy and $H(X \mid Y)=\E_{(x,y)\sim P_{XY}}[\log(1/P_{X\mid Y}(x\mid y))]$ to be the conditional (Shannon) entropy.
The mutual information between two random variables $X$ and $Y$ is $I(X;Y) := H(X) - H(X \mid Y) = H(Y) - H(Y \mid X)$. 
The conditional mutual information is $I(X;Y \mid Z) := H(X \mid Z) - H(X \mid Y, Z)$.
We frequently make use of the following inequality:
\[
I(X ; Y \mid Z) = \bE_{X, Z}\left[\KL(P_{Y \mid Z} \| P_{Y \mid X, Z})\right].
\]
We also use the following well known facts about conditional mutual information \cite{CoverT06}:
\begin{proposition}[Chain rule]\label{prop:chain_rule}
For random variables $X_1, X_2, Y, Z$ we have
\[
I(X_1,X_2 ; Y \mid Z) = I(X_1 ; Y \mid Z, X_2) + I(X_2 ; Y \mid Z).
\]
\end{proposition}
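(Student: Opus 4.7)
The plan is to reduce the identity to the chain rule for (conditional) entropy by expanding both sides of the claimed equality using the definition $I(A;B\mid C)=H(A\mid C)-H(A\mid B,C)$. So I would first write
\[
I(X_1,X_2\,;\,Y\mid Z) \;=\; H(X_1,X_2\mid Z)\;-\;H(X_1,X_2\mid Y,Z),
\]
and then apply the chain rule for conditional entropy, namely $H(A,B\mid C)=H(B\mid C)+H(A\mid B,C)$, to each of the two joint conditional entropies on the right-hand side. Concretely, I would expand $H(X_1,X_2\mid Z)=H(X_2\mid Z)+H(X_1\mid X_2,Z)$ and $H(X_1,X_2\mid Y,Z)=H(X_2\mid Y,Z)+H(X_1\mid X_2,Y,Z)$.

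Next, I would regroup the four resulting terms into two differences:
\[
\bigl[H(X_2\mid Z)-H(X_2\mid Y,Z)\bigr]\;+\;\bigl[H(X_1\mid X_2,Z)-H(X_1\mid X_2,Y,Z)\bigr].
\]
The first bracket is exactly $I(X_2;Y\mid Z)$ by definition, and the second bracket is exactly $I(X_1;Y\mid Z,X_2)$ (treating the pair $(Z,X_2)$ as the conditioning variable in the definition of conditional mutual information). Summing gives the desired right-hand side.

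The only thing that genuinely needs justification is the chain rule for conditional entropy used in the expansion, and that follows in one line from the definition $H(A\mid C)=\E_{c\sim P_C}[H(P_{A\mid C=c})]$ together with the elementary identity $P_{AB\mid C}(a,b\mid c)=P_{B\mid C}(b\mid c)\cdot P_{A\mid B,C}(a\mid b,c)$ and taking logarithms inside the expectation. There is no real obstacle here: the statement is purely algebraic once the definitions are unwound, and I expect the proof to be a handful of lines consisting of (i) writing out both mutual informations as differences of conditional entropies, (ii) invoking the conditional-entropy chain rule, and (iii) regrouping.
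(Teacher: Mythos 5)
Your proof is correct and is the standard textbook derivation: expand both conditional mutual informations as differences of conditional entropies, apply the chain rule for conditional entropy, and regroup. The paper does not prove this proposition at all — it states it as a well-known fact and cites Cover and Thomas — so there is no in-paper proof to compare against, but your argument matches the standard one found in that reference.
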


\begin{proposition}[Superadditivity]\label{prop:superadd}
Let $X_1, \cdots, X_n$, $Y$ and $Z$ be random variables such that $X_1, \dots, X_n$ are conditionally independent given $Z$. 
Then
\[
\sum_{i=1}^{n}I(X_i ; Y \mid Z) \le I(X_1, \dots, X_n ; Y \mid Z).
\]
\end{proposition}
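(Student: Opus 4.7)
The plan is to prove this by an iterated application of the chain rule (Proposition~\ref{prop:chain_rule}) combined with the conditional independence hypothesis. Concretely, I would first write, by $n-1$ successive applications of the chain rule,
\[
I(X_1,\dots,X_n;Y\mid Z) \;=\; \sum_{i=1}^{n} I\bigl(X_i;Y\midd Z,X_1,\dots,X_{i-1}\bigr),
\]
and then argue that the $i$-th summand is at least $I(X_i;Y\mid Z)$. Summing these termwise inequalities immediately gives the conclusion.

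To establish the per-term inequality, I would rewrite both quantities as differences of Shannon entropies:
\[
I\bigl(X_i;Y\midd Z,X_{<i}\bigr) = H(X_i\mid Z,X_{<i}) - H(X_i\mid Y,Z,X_{<i}),
\]
\[
I(X_i;Y\mid Z) = H(X_i\mid Z) - H(X_i\mid Y,Z),
\]
where $X_{<i}$ abbreviates $X_1,\dots,X_{i-1}$. The conditional independence of $X_1,\dots,X_n$ given $Z$ implies $X_i \ind X_{<i}\mid Z$, which yields the equality $H(X_i\mid Z,X_{<i}) = H(X_i\mid Z)$. For the second terms, the standard ``conditioning reduces entropy'' inequality gives $H(X_i\mid Y,Z,X_{<i}) \le H(X_i\mid Y,Z)$. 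Combining these, $I(X_i;Y\mid Z,X_{<i}) \ge I(X_i;Y\mid Z)$, and summing over $i$ completes the argument.

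There is no serious obstacle here; the only thing to be careful about is that the inequality $H(X_i\mid Y,Z,X_{<i}) \le H(X_i\mid Y,Z)$ does \emph{not} require any independence assumption (conditioning on additional variables never increases entropy), whereas the equality $H(X_i\mid Z,X_{<i}) = H(X_i\mid Z)$ is exactly where the hypothesis of conditional independence given $Z$ is used. Note that in general $I(X_i;Y\mid Z,X_{<i})$ and $I(X_i;Y\mid Z)$ are incomparable without such a hypothesis, which is why the statement as written needs conditional independence.
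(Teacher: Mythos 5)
Your proof is correct. The paper does not actually give a proof of this proposition—it is stated as a ``well known fact'' with a citation to Cover and Thomas—so there is no in-paper argument to compare against; your derivation (iterated chain rule plus the termwise bound $I(X_i;Y\mid Z,X_{<i}) \ge I(X_i;Y\mid Z)$, obtained from $H(X_i\mid Z,X_{<i})=H(X_i\mid Z)$ via conditional independence and the conditioning-reduces-entropy inequality for the subtracted term) is exactly the standard proof of this fact, and your remark pinpointing precisely where the conditional-independence hypothesis is used is accurate.
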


Lastly, $c,c_1,c_2,\ldots>0$ represent universal constants that may change from statement to statement.

\subsection{Proof overview}

Now we present an overview of our lower bound proof.
It turns out that the terms $\Omega(k(\log\log (n/k)+\log(1/\eps)))$ in the lower bound can be proved using a similar argument to the single counter case, which holds even in the offline setting, and the main challenge is to prove the dependence on $\delta$, $\Omega(k\log\log(1/\delta))$.
The previous proof, for single counter, of $\Omega(\log\log(1/\delta))$ is based on a pumping-lemma argument, which crucially uses the fact that all updates are exactly the same, i.e., incrementing the counter.
However, this no longer holds with multiple counters -- we may increment any one of the $k$ counters each time, and there are $k$ different updates we can perform.
The previous argument fails to generalize.

As we mentioned in the introduction, we first (re)prove an \emph{information theoretic} lower bound for single counter, then apply the \emph{superadditivity of mutual information} for independent variables to obtain the direct-sum result.
For simplicity, in this overview we will focus on the case where $\delta=2^{-\Theta(n/k)}$, and prove an $\Omega(k\log\log(1/\delta))=\Omega(k\log (n/k))$ lower bound when $\eps\leq O(1/\log (n/k))$.
This case captures all the main ideas, and generalizing to the full lower bound is straightforward.

To facilitate the argument, we first slightly reformulate the problem:\footnote{For simplicity, the overview uses slightly different notation and parameters than the actual proof.} Consider a stream with $T$ batches of updates (think $T=(n/k)^{0.1}$), and in each batch $i$, the inputs are $k$ nonnegative integers $X_{i,c}$ for $c\in[k]$, which are the number of increments we apply to each counter in this batch.
Since the batch number $i$ takes only $O(\log T)$ bits to maintain, we may assume without loss of generality that it is given to the algorithm for free.
Clearly if there is an algorithm that approximately maintains $k$ counters, this reformulation also admits a solution using the same space, provided  $\sum_{i,c}X_{i,c}\leq n$.
After the reformulation, the single counter problem simply has $T$ nonnegative numbers $X_i$ as the input stream, and we would like to approximate their sum using small space provided that the sum is at most $\poly\, T$, again assuming that $i$ is given to the algorithm for free.

We will design a hard distribution over the input streams, and analyze the failure probability and measure ``information'' with respect to this distribution.
It is crucial that the notion of information cost we use for streaming algorithms is chosen carefully.
The information cost defined in~\cite{BravermanGW20} turns out to be the right measure for our problem.
Fix a streaming algorithm, and let $M_i$ be its memory state immediately after processing $X_i$ (see Figure~\ref{fig:general-stream}).
When all numbers $X_i$ are independent, which will be the case for our distribution, the \emph{information cost} of this algorithm is defined as follows (see Figure~\ref{fig:info-cost})
\[
	\IC:=\sum_{i=1}^T\sum_{j=1}^{i}I(M_i; X_j \mid M_{j-1}).
\]

\begin{figure}[H]
    \centering
    \tikzset{every picture/.style={line width=0.75pt}} 

\begin{tikzpicture}[x=0.75pt,y=0.75pt,yscale=-1,xscale=1]

\draw   (140,153) .. controls (140,139.19) and (151.19,128) .. (165,128) .. controls (178.81,128) and (190,139.19) .. (190,153) .. controls (190,166.81) and (178.81,178) .. (165,178) .. controls (151.19,178) and (140,166.81) .. (140,153) -- cycle ;
\draw    (124.5,185.5) -- (143.97,173.11) ;
\draw [shift={(146.5,171.5)}, rotate = 147.53] [fill={rgb, 255:red, 0; green, 0; blue, 0 }  ][line width=0.08]  [draw opacity=0] (8.93,-4.29) -- (0,0) -- (8.93,4.29) -- cycle    ;
\draw   (240,153) .. controls (240,139.19) and (251.19,128) .. (265,128) .. controls (278.81,128) and (290,139.19) .. (290,153) .. controls (290,166.81) and (278.81,178) .. (265,178) .. controls (251.19,178) and (240,166.81) .. (240,153) -- cycle ;
\draw    (224.5,185.5) -- (243.97,173.11) ;
\draw [shift={(246.5,171.5)}, rotate = 147.53] [fill={rgb, 255:red, 0; green, 0; blue, 0 }  ][line width=0.08]  [draw opacity=0] (8.93,-4.29) -- (0,0) -- (8.93,4.29) -- cycle    ;
\draw   (340,153) .. controls (340,139.19) and (351.19,128) .. (365,128) .. controls (378.81,128) and (390,139.19) .. (390,153) .. controls (390,166.81) and (378.81,178) .. (365,178) .. controls (351.19,178) and (340,166.81) .. (340,153) -- cycle ;
\draw    (324.5,185.5) -- (343.97,173.11) ;
\draw [shift={(346.5,171.5)}, rotate = 147.53] [fill={rgb, 255:red, 0; green, 0; blue, 0 }  ][line width=0.08]  [draw opacity=0] (8.93,-4.29) -- (0,0) -- (8.93,4.29) -- cycle    ;
\draw   (540,153) .. controls (540,139.19) and (551.19,128) .. (565,128) .. controls (578.81,128) and (590,139.19) .. (590,153) .. controls (590,166.81) and (578.81,178) .. (565,178) .. controls (551.19,178) and (540,166.81) .. (540,153) -- cycle ;
\draw    (524.5,185.5) -- (543.97,173.11) ;
\draw [shift={(546.5,171.5)}, rotate = 147.53] [fill={rgb, 255:red, 0; green, 0; blue, 0 }  ][line width=0.08]  [draw opacity=0] (8.93,-4.29) -- (0,0) -- (8.93,4.29) -- cycle    ;
\draw    (165,105) -- (165,125) ;
\draw [shift={(165,128)}, rotate = 270] [fill={rgb, 255:red, 0; green, 0; blue, 0 }  ][line width=0.08]  [draw opacity=0] (8.93,-4.29) -- (0,0) -- (8.93,4.29) -- cycle    ;
\draw    (265,105) -- (265,125) ;
\draw [shift={(265,128)}, rotate = 270] [fill={rgb, 255:red, 0; green, 0; blue, 0 }  ][line width=0.08]  [draw opacity=0] (8.93,-4.29) -- (0,0) -- (8.93,4.29) -- cycle    ;
\draw    (365,105) -- (365,125) ;
\draw [shift={(365,128)}, rotate = 270] [fill={rgb, 255:red, 0; green, 0; blue, 0 }  ][line width=0.08]  [draw opacity=0] (8.93,-4.29) -- (0,0) -- (8.93,4.29) -- cycle    ;
\draw    (565,105) -- (565,125) ;
\draw [shift={(565,128)}, rotate = 270] [fill={rgb, 255:red, 0; green, 0; blue, 0 }  ][line width=0.08]  [draw opacity=0] (8.93,-4.29) -- (0,0) -- (8.93,4.29) -- cycle    ;
\draw    (190,153) -- (237,153) ;
\draw [shift={(240,153)}, rotate = 180] [fill={rgb, 255:red, 0; green, 0; blue, 0 }  ][line width=0.08]  [draw opacity=0] (8.93,-4.29) -- (0,0) -- (8.93,4.29) -- cycle    ;
\draw    (290,153) -- (337,153) ;
\draw [shift={(340,153)}, rotate = 180] [fill={rgb, 255:red, 0; green, 0; blue, 0 }  ][line width=0.08]  [draw opacity=0] (8.93,-4.29) -- (0,0) -- (8.93,4.29) -- cycle    ;
\draw    (390,153) -- (437,153) ;
\draw [shift={(440,153)}, rotate = 180] [fill={rgb, 255:red, 0; green, 0; blue, 0 }  ][line width=0.08]  [draw opacity=0] (8.93,-4.29) -- (0,0) -- (8.93,4.29) -- cycle    ;
\draw    (490,153) -- (537,153) ;
\draw [shift={(540,153)}, rotate = 180] [fill={rgb, 255:red, 0; green, 0; blue, 0 }  ][line width=0.08]  [draw opacity=0] (8.93,-4.29) -- (0,0) -- (8.93,4.29) -- cycle    ;
\draw  [fill={rgb, 255:red, 0; green, 0; blue, 0 }  ,fill opacity=1 ] (449,153) .. controls (449,152.45) and (449.45,152) .. (450,152) .. controls (450.55,152) and (451,152.45) .. (451,153) .. controls (451,153.55) and (450.55,154) .. (450,154) .. controls (449.45,154) and (449,153.55) .. (449,153) -- cycle ;
\draw  [fill={rgb, 255:red, 0; green, 0; blue, 0 }  ,fill opacity=1 ] (459,153) .. controls (459,152.45) and (459.45,152) .. (460,152) .. controls (460.55,152) and (461,152.45) .. (461,153) .. controls (461,153.55) and (460.55,154) .. (460,154) .. controls (459.45,154) and (459,153.55) .. (459,153) -- cycle ;
\draw  [fill={rgb, 255:red, 0; green, 0; blue, 0 }  ,fill opacity=1 ] (469,153) .. controls (469,152.45) and (469.45,152) .. (470,152) .. controls (470.55,152) and (471,152.45) .. (471,153) .. controls (471,153.55) and (470.55,154) .. (470,154) .. controls (469.45,154) and (469,153.55) .. (469,153) -- cycle ;
\draw  [fill={rgb, 255:red, 0; green, 0; blue, 0 }  ,fill opacity=1 ] (479,153) .. controls (479,152.45) and (479.45,152) .. (480,152) .. controls (480.55,152) and (481,152.45) .. (481,153) .. controls (481,153.55) and (480.55,154) .. (480,154) .. controls (479.45,154) and (479,153.55) .. (479,153) -- cycle ;

\draw   (40,153) .. controls (40,139.19) and (51.19,128) .. (65,128) .. controls (78.81,128) and (90,139.19) .. (90,153) .. controls (90,166.81) and (78.81,178) .. (65,178) .. controls (51.19,178) and (40,166.81) .. (40,153) -- cycle ;
\draw    (90,153) -- (137,153) ;
\draw [shift={(140,153)}, rotate = 180] [fill={rgb, 255:red, 0; green, 0; blue, 0 }  ][line width=0.08]  [draw opacity=0] (8.93,-4.29) -- (0,0) -- (8.93,4.29) -- cycle    ;

\draw (107,182.4) node [anchor=north west][inner sep=0.75pt]    {$X_{1}$};
\draw (207,182.4) node [anchor=north west][inner sep=0.75pt]    {$X_{2}$};
\draw (254.5,82.4) node [anchor=north west][inner sep=0.75pt]    {$R_{2}$};
\draw (154.5,82.4) node [anchor=north west][inner sep=0.75pt]    {$R_{1}$};
\draw (252,144.4) node [anchor=north west][inner sep=0.75pt]    {$M_{2}$};
\draw (152,144.4) node [anchor=north west][inner sep=0.75pt]    {$M_{1}$};
\draw (307,182.4) node [anchor=north west][inner sep=0.75pt]    {$X_{3}$};
\draw (354.5,82.4) node [anchor=north west][inner sep=0.75pt]    {$R_{3}$};
\draw (352,144.4) node [anchor=north west][inner sep=0.75pt]    {$M_{3}$};
\draw (507,182.4) node [anchor=north west][inner sep=0.75pt]    {$X_{n}$};
\draw (554.5,82.4) node [anchor=north west][inner sep=0.75pt]    {$R_{n}$};
\draw (552,144.4) node [anchor=north west][inner sep=0.75pt]    {$M_{n}$};
\draw (52,144.4) node [anchor=north west][inner sep=0.75pt]    {$M_{0}$};

\end{tikzpicture}
    \caption{A depiction of the evolution of the memory state of a randomized streaming algorithm $\cA$.
    $X_1, \dots, X_n$ are the inputs the streaming algorithm receives and $R_1, \dots, R_n$ are the independent random bits $\cA$ uses as it processes the stream.
    $M_0$ is the (possibly random) initial state and $M_i$ is the state of $\cA$ after processing $X_i$.
    Note that $M_i$ is a deterministic function of the previous memory state $M_{i-1}$, the $i$th input $X_i$, and the random bits $R_i$. 
    When the inputs are random variables, this figure also depicts the dependence structure of the joint distribution of the random variables $(X_1, \dots, X_n, R_1, \dots, R_n,  M_0, M_1, \dots, M_n)$.}
    \label{fig:general-stream}
\end{figure}
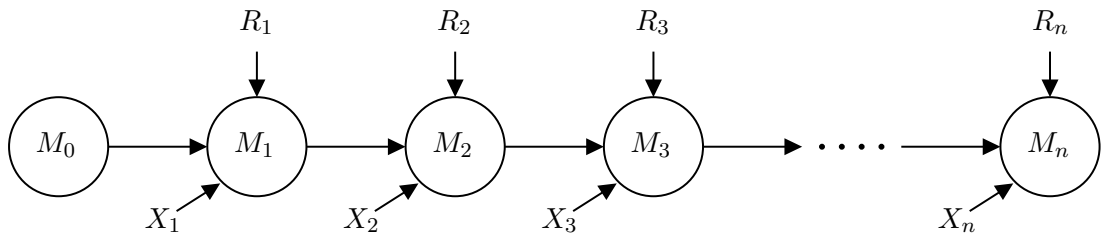

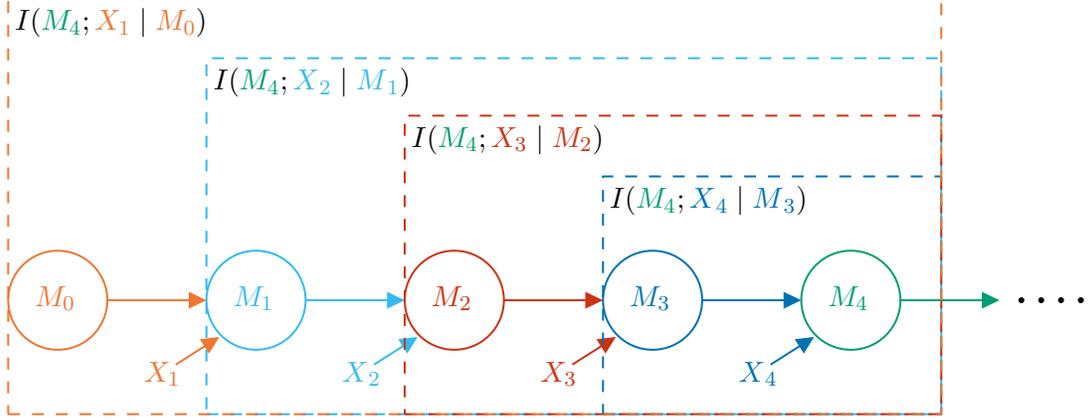
\begin{figure}[H]
    \centering
    \tikzset{every picture/.style={line width=0.75pt}} 

\begin{tikzpicture}[x=0.75pt,y=0.75pt,yscale=-1,xscale=1]

\draw  [color={rgb, 255:red, 51; green, 187; blue, 238 }  ,draw opacity=1 ] (160,173) .. controls (160,159.19) and (171.19,148) .. (185,148) .. controls (198.81,148) and (210,159.19) .. (210,173) .. controls (210,186.81) and (198.81,198) .. (185,198) .. controls (171.19,198) and (160,186.81) .. (160,173) -- cycle ;
\draw [color={rgb, 255:red, 238; green, 119; blue, 51 }  ,draw opacity=1 ]   (144.5,205.5) -- (163.97,193.11) ;
\draw [shift={(166.5,191.5)}, rotate = 147.53] [fill={rgb, 255:red, 238; green, 119; blue, 51 }  ,fill opacity=1 ][line width=0.08]  [draw opacity=0] (8.93,-4.29) -- (0,0) -- (8.93,4.29) -- cycle    ;
\draw  [color={rgb, 255:red, 204; green, 51; blue, 17 }  ,draw opacity=1 ] (260,173) .. controls (260,159.19) and (271.19,148) .. (285,148) .. controls (298.81,148) and (310,159.19) .. (310,173) .. controls (310,186.81) and (298.81,198) .. (285,198) .. controls (271.19,198) and (260,186.81) .. (260,173) -- cycle ;
\draw [color={rgb, 255:red, 51; green, 187; blue, 238 }  ,draw opacity=1 ]   (244.5,205.5) -- (263.97,193.11) ;
\draw [shift={(266.5,191.5)}, rotate = 147.53] [fill={rgb, 255:red, 51; green, 187; blue, 238 }  ,fill opacity=1 ][line width=0.08]  [draw opacity=0] (8.93,-4.29) -- (0,0) -- (8.93,4.29) -- cycle    ;
\draw  [color={rgb, 255:red, 0; green, 114; blue, 178 }  ,draw opacity=1 ] (360,173) .. controls (360,159.19) and (371.19,148) .. (385,148) .. controls (398.81,148) and (410,159.19) .. (410,173) .. controls (410,186.81) and (398.81,198) .. (385,198) .. controls (371.19,198) and (360,186.81) .. (360,173) -- cycle ;
\draw [color={rgb, 255:red, 204; green, 51; blue, 17 }  ,draw opacity=1 ]   (344.5,205.5) -- (363.97,193.11) ;
\draw [shift={(366.5,191.5)}, rotate = 147.53] [fill={rgb, 255:red, 204; green, 51; blue, 17 }  ,fill opacity=1 ][line width=0.08]  [draw opacity=0] (8.93,-4.29) -- (0,0) -- (8.93,4.29) -- cycle    ;
\draw  [color={rgb, 255:red, 0; green, 158; blue, 115 }  ,draw opacity=1 ] (460,173) .. controls (460,159.19) and (471.19,148) .. (485,148) .. controls (498.81,148) and (510,159.19) .. (510,173) .. controls (510,186.81) and (498.81,198) .. (485,198) .. controls (471.19,198) and (460,186.81) .. (460,173) -- cycle ;
\draw [color={rgb, 255:red, 0; green, 114; blue, 178 }  ,draw opacity=1 ]   (444.5,205.5) -- (463.97,193.11) ;
\draw [shift={(466.5,191.5)}, rotate = 147.53] [fill={rgb, 255:red, 0; green, 114; blue, 178 }  ,fill opacity=1 ][line width=0.08]  [draw opacity=0] (8.93,-4.29) -- (0,0) -- (8.93,4.29) -- cycle    ;
\draw [color={rgb, 255:red, 51; green, 187; blue, 238 }  ,draw opacity=1 ]   (210,173) -- (257,173) ;
\draw [shift={(260,173)}, rotate = 180] [fill={rgb, 255:red, 51; green, 187; blue, 238 }  ,fill opacity=1 ][line width=0.08]  [draw opacity=0] (8.93,-4.29) -- (0,0) -- (8.93,4.29) -- cycle    ;
\draw [color={rgb, 255:red, 204; green, 51; blue, 17 }  ,draw opacity=1 ]   (310,173) -- (357,173) ;
\draw [shift={(360,173)}, rotate = 180] [fill={rgb, 255:red, 204; green, 51; blue, 17 }  ,fill opacity=1 ][line width=0.08]  [draw opacity=0] (8.93,-4.29) -- (0,0) -- (8.93,4.29) -- cycle    ;
\draw [color={rgb, 255:red, 0; green, 114; blue, 178 }  ,draw opacity=1 ]   (410,173) -- (457,173) ;
\draw [shift={(460,173)}, rotate = 180] [fill={rgb, 255:red, 0; green, 114; blue, 178 }  ,fill opacity=1 ][line width=0.08]  [draw opacity=0] (8.93,-4.29) -- (0,0) -- (8.93,4.29) -- cycle    ;
\draw  [color={rgb, 255:red, 51; green, 187; blue, 238 }  ,draw opacity=1 ][dash pattern={on 4.5pt off 4.5pt}] (160,50.83) -- (530.6,50.83) -- (530.6,230.5) -- (160,230.5) -- cycle ;
\draw  [color={rgb, 255:red, 204; green, 51; blue, 17 }  ,draw opacity=1 ][dash pattern={on 4.5pt off 4.5pt}] (260,79.83) -- (530.6,79.83) -- (530.6,230.5) -- (260,230.5) -- cycle ;
\draw [color={rgb, 255:red, 0; green, 158; blue, 115 }  ,draw opacity=1 ]   (510,173) -- (557,173) ;
\draw [shift={(560,173)}, rotate = 180] [fill={rgb, 255:red, 0; green, 158; blue, 115 }  ,fill opacity=1 ][line width=0.08]  [draw opacity=0] (8.93,-4.29) -- (0,0) -- (8.93,4.29) -- cycle    ;
\draw  [color={rgb, 255:red, 0; green, 114; blue, 178 }  ,draw opacity=1 ][dash pattern={on 4.5pt off 4.5pt}] (360,110.5) -- (530.6,110.5) -- (530.6,230.5) -- (360,230.5) -- cycle ;
\draw  [fill={rgb, 255:red, 0; green, 0; blue, 0 }  ,fill opacity=1 ] (570,173) .. controls (570,172.45) and (570.45,172) .. (571,172) .. controls (571.55,172) and (572,172.45) .. (572,173) .. controls (572,173.55) and (571.55,174) .. (571,174) .. controls (570.45,174) and (570,173.55) .. (570,173) -- cycle ;
\draw  [fill={rgb, 255:red, 0; green, 0; blue, 0 }  ,fill opacity=1 ] (580,173) .. controls (580,172.45) and (580.45,172) .. (581,172) .. controls (581.55,172) and (582,172.45) .. (582,173) .. controls (582,173.55) and (581.55,174) .. (581,174) .. controls (580.45,174) and (580,173.55) .. (580,173) -- cycle ;
\draw  [fill={rgb, 255:red, 0; green, 0; blue, 0 }  ,fill opacity=1 ] (590,173) .. controls (590,172.45) and (590.45,172) .. (591,172) .. controls (591.55,172) and (592,172.45) .. (592,173) .. controls (592,173.55) and (591.55,174) .. (591,174) .. controls (590.45,174) and (590,173.55) .. (590,173) -- cycle ;
\draw  [fill={rgb, 255:red, 0; green, 0; blue, 0 }  ,fill opacity=1 ] (600,173) .. controls (600,172.45) and (600.45,172) .. (601,172) .. controls (601.55,172) and (602,172.45) .. (602,173) .. controls (602,173.55) and (601.55,174) .. (601,174) .. controls (600.45,174) and (600,173.55) .. (600,173) -- cycle ;

\draw  [color={rgb, 255:red, 238; green, 119; blue, 51 }  ,draw opacity=1 ] (60,173) .. controls (60,159.19) and (71.19,148) .. (85,148) .. controls (98.81,148) and (110,159.19) .. (110,173) .. controls (110,186.81) and (98.81,198) .. (85,198) .. controls (71.19,198) and (60,186.81) .. (60,173) -- cycle ;
\draw [color={rgb, 255:red, 238; green, 119; blue, 51 }  ,draw opacity=1 ]   (110,173) -- (157,173) ;
\draw [shift={(160,173)}, rotate = 180] [fill={rgb, 255:red, 238; green, 119; blue, 51 }  ,fill opacity=1 ][line width=0.08]  [draw opacity=0] (8.93,-4.29) -- (0,0) -- (8.93,4.29) -- cycle    ;
\draw  [color={rgb, 255:red, 238; green, 119; blue, 51 }  ,draw opacity=1 ][dash pattern={on 4.5pt off 4.5pt}] (60,20.83) -- (530.6,20.83) -- (530.6,230.5) -- (60,230.5) -- cycle ;

\draw (127,202.4) node [anchor=north west][inner sep=0.75pt]  [color={rgb, 255:red, 238; green, 119; blue, 51 }  ,opacity=1 ]  {$X_{1}$};
\draw (227,202.4) node [anchor=north west][inner sep=0.75pt]  [color={rgb, 255:red, 0; green, 119; blue, 187 }  ,opacity=1 ]  {$\textcolor[rgb]{0.2,0.73,0.93}{X}\textcolor[rgb]{0.2,0.73,0.93}{_{2}}$};
\draw (272,164.4) node [anchor=north west][inner sep=0.75pt]  [color={rgb, 255:red, 204; green, 51; blue, 17 }  ,opacity=1 ]  {$M_{2}$};
\draw (172,164.4) node [anchor=north west][inner sep=0.75pt]  [color={rgb, 255:red, 51; green, 187; blue, 238 }  ,opacity=1 ]  {$M_{1}$};
\draw (327,202.4) node [anchor=north west][inner sep=0.75pt]  [color={rgb, 255:red, 204; green, 51; blue, 17 }  ,opacity=1 ]  {$X_{3}$};
\draw (372,164.4) node [anchor=north west][inner sep=0.75pt]  [color={rgb, 255:red, 0; green, 114; blue, 178 }  ,opacity=1 ]  {$M_{3}$};
\draw (427,202.4) node [anchor=north west][inner sep=0.75pt]    {$\textcolor[rgb]{0,0.45,0.7}{X}\textcolor[rgb]{0,0.45,0.7}{_{4}}$};
\draw (472,164.4) node [anchor=north west][inner sep=0.75pt]  [color={rgb, 255:red, 102; green, 51; blue, 51 }  ,opacity=1 ]  {$\textcolor[rgb]{0,0.62,0.45}{M_{4}}$};
\draw (362,113.9) node [anchor=north west][inner sep=0.75pt]    {$I(\textcolor[rgb]{0,0.62,0.45}{M_{4}} ;\textcolor[rgb]{0,0.45,0.7}{X}\textcolor[rgb]{0,0.45,0.7}{_{4}} \mid \textcolor[rgb]{0,0.45,0.7}{M}\textcolor[rgb]{0,0.45,0.7}{_{3}})$};
\draw (262,83.23) node [anchor=north west][inner sep=0.75pt]    {$I(\textcolor[rgb]{0,0.62,0.45}{M_{4}} ;\textcolor[rgb]{0.8,0.2,0.07}{X_{3}} \mid \textcolor[rgb]{0.8,0.2,0.07}{M_{2}})$};
\draw (162,54.23) node [anchor=north west][inner sep=0.75pt]    {$I(\textcolor[rgb]{0,0.62,0.45}{M_{4}} ;\textcolor[rgb]{0.2,0.73,0.93}{X}\textcolor[rgb]{0.2,0.73,0.93}{_{2}} \mid \textcolor[rgb]{0.2,0.73,0.93}{M}\textcolor[rgb]{0.2,0.73,0.93}{_{1}})$};
\draw (72,164.4) node [anchor=north west][inner sep=0.75pt]  [color={rgb, 255:red, 51; green, 187; blue, 238 }  ,opacity=1 ]  {$\textcolor[rgb]{0.93,0.47,0.2}{M_{0}}$};
\draw (62,24.23) node [anchor=north west][inner sep=0.75pt]    {$I(\textcolor[rgb]{0,0.62,0.45}{M_{4}} ;\textcolor[rgb]{0.93,0.47,0.2}{X_{1}} \mid \textcolor[rgb]{0.93,0.47,0.2}{M_{0}})$};

\end{tikzpicture}
    \caption{An illustration of a single inner sum of the information cost $\sum_{j \le i} I(M_i ; X_j | M_{j-1})$ for $i = 4$.
    To simplify the presentation we do not include the random bits that the streaming algorithm uses while processing the stream in this illustration.}
    \label{fig:info-cost}
\end{figure}

As it was shown in~\cite{BravermanGW20}, this quantity lower bounds $\sum_{i=1}^T \left|M_i\right|$, i.e., $T$ times the memory consumption, and it satisfies the direct-sum property: solving $k$ independent instances of the problem needs exactly $k$ times $\IC$.
Thus, it suffices to prove an $\IC$ lower bound of $\Omega(T\log T)=\Omega(T\log(n/k))$ for a single approximate counter.
Intuitively, such a lower bound means that the algorithm must spend $\Omega(1)$ bits remembering each bit of the sum (recall that we ensure $\sum X_i\leq \poly\, T$).

Let us first focus on the lowest bit, i.e., the parity of the sum.
Note that one does not have to know the lowest bit in order to return an approximation of the sum.
Nevertheless, we will show that the algorithm has to constantly pay attention to this bit in order to output a good approximation with very high probability.
To this end, let us consider the distribution where the $X_i$'s are independent and uniform in $\{0,1\}$.
Let us focus on the terms in $\IC$ with $j=i$, i.e., those of the form $I(M_i; X_i\mid M_{i-1})$.
Intuitively, under this distribution, this term represents how much attention the algorithm is paying to the evolution of the parity at batch $i$.
This is because $X_i$ is simply the difference between the parities of the first $i-1$ and first $i$ inputs.

Suppose that $I(M_i; X_i\mid M_{i-1})\ll 1$ for a constant fraction of $i$. Then, we can show that the algorithm will make an error of $\Omega(T)$ with at least $\delta$ probability.
Roughly speaking, this is because for each such $i$, conditioned on $M_i$ and $M_{i-1}$, the distribution of $X_i$ is still close to uniform (as $X_i$ is uniform conditioned on $M_{i-1}$).
Therefore, if we condition on all $M_0,M_1,\ldots,M_T$, most $X_i$ can still take both values $0$ or $1$ with constant probability, and all $X_i$ are still independent by the Markov property of the algorithm.
In particular, by setting all these $X_i$'s to $0$ or setting all to $1$, we reach the same final memory state $M_T$, but in the two cases, the total sum differs by $\Omega(T)$.
Since both happen with probability $2^{-O(T)}\gg \delta$ given the final memory state $M_T$, the algorithm must make an error of at least $\Omega(T)$ with probability at least $\delta$.

We can extend this argument to any specific bit of the sum by considering a stream with $\Theta(T/B^l)$ independent random increments that are uniform in $\{0,B^l\}$ for some constant $B$ and $l\in[L]$, where $L=\log_B T$.
Our final hard distribution interleaves $L$ such streams, which we call the scales.
For each scale $l\in[L]$, we evenly spread the $\Theta(T/B^l)$ random increments in the whole stream with a gap of $\Theta(B^l)$ batches.
Note that now the sum of all inputs is always at most $O(T\log T)$.
For the sum in the definition of $\IC$, we will only focus on $L$ terms for each $i$: $I(M_i; X_{\lfloor i\rfloor_l} \mid M_{\lfloor i\rfloor_l-1})$, where $X_{\lfloor i\rfloor_l}$ is the closest scale $l$ input before $M_i$.
If $\IC\ll T\cdot L=O(T\log T)$, then there must exist some scale $l^*$ such that $I(M_i; X_{\lfloor i\rfloor_{l^*}} \mid M_{{\lfloor i\rfloor_{l^*}}-1})\ll 1$ for most $i$.
Then we apply the above argument, and show that conditioned on the memory states right before each scale-$l^*$ input, most scale-$l^*$ inputs can still take values $0$ or $B^{l^*}$ with constant probability, and the scale-$l^*$ inputs are still independent. 
We further observe that for most of them, the inputs between $X_{\lfloor i\rfloor_{l^*}}$ and $M_i$ are coming from the lower scales $l<l^*$.
The standard deviation of their sum is much smaller than $B^{l^*}$, and we can show that they do not affect the sum by too much as we alter $X_{\lfloor i\rfloor_{l^*}}$.
Thus, by setting these scale-$l^*$ inputs to $0$ or $B^{l^*}$, the entire sum will again differ by $\Omega(T)$, but they lead to the same final memory state, i.e., the algorithm does not distinguish between the two cases.
Since the sum is $O(T\log T)$, such a difference is more than $\eps$ times the sum when $\eps<O(1/\log T)$. A more careful analysis shows that this happens with probability at least $2^{-O(T)}>\delta$, leading to a contradiction.

\paragraph{Discussion on the choice of the hard distribution.} We note here that the independent uniform $\{0,1\}$ distribution, by itself, is \emph{not} hard for the information cost.
One solution with low information cost is to divide $X_i$'s into blocks of size $S=\Theta(\eps^{-2}\log T)$, and maintain the exact sum within the current block.
If \emph{all} blocks have sums $(1\pm \eps)S/2$, then we simply remember this fact and use $T/2$ as the final output.
Otherwise, the algorithm finds a block whose sum is not in this range, then it maintains the exact sum for all future blocks.
Since all previous blocks have sums $(1\pm\eps)S/2$, in particular, $S/2$ is a $(1 + \eps)$-approximation for them, the algorithm can also return a $(1+ \eps)$-approximation of the whole sum.
Now note that this case only happens with $1/\poly(T)$ probability, and the total information cost is at most $T$ times the expected memory usage.
The expected memory is at most $O(\log(1/\eps)+\log\log T)$ bits for maintaining the sum in the current block, plus $O(T^{-\Theta(1)}\cdot \log T)$ bits in expectation for maintaining the entire exact sum.
When $\eps=\Theta(1/\log T)$, the information cost is only $O(T\log\log T)\ll T\log T$.

The above strategy works since the sum of block is concentrated around the expectation, hence, we need extra space to maintain the exact sum only with very small probability.
One can also show that the above strategy also applies to any i.i.d. distributions with some concentration.\footnote{For example, it suffices to have finite $\E[|X|^{1.01}]$.}
Our hard distribution is inspired by the \emph{discrete half-Cauchy distribution}, which has probability $\Theta(1/(x+1)^2)$ at integers $x\geq 0$.
This distribution does not have an expectation, hence, there is no concentration for blocks of any size.
We also have a more involved proof of our main result that uses this distribution instead; the main property that proof relies on is that for every $W$ random variables, roughly one of them takes value $\Theta(W)$.
The hard distribution we actually use in this paper is a bounded distribution that can be viewed as extracting this useful property of the half-Cauchy distribution, but which can be analyzed more simply.

\section{Lower Bounds}

In this section we will prove our lower bound for the space complexity of $(k,\eps,\delta)$-approximate counting.
We split the proof into two parts.
We first focus on proving the difficult part of the lower bound that depends on the failure probability $\delta$.
To do so, we derive a lower bound for the space complexity of $(\eps,\delta)$-approximate counting by appealing to an information theoretic argument.
By using a good definition of information cost together with an appropriately chosen hard distribution, we can prove that any accurate algorithm remember a lot of information about many different parts of the stream, i.e.\ it incurs a high information cost. 
This immediately gives us a lower bound on the memory size.
We then use this result to prove a space lowerbound for $(k,\eps,\delta)$-approximate counting via a direct sum argument.
We conclude the section by proving the portion of the lower bound that depends on the approximation error $\eps$ and the total sum of all counters $n$ by generalizing the argument used in the single counter case.

\subsection{Information lower bound for a single counter}

Recall that for a randomized streaming algorithm $\cA$ we define $M_i$ to be the state of $\cA$ after processing the $i$th input $X_i$ together with some additional independent random bits $R_i$, i.e. $M_i$ is a deterministic function of $M_{i-1}$, $X_i$ and $R_i$ (equivalently, a deterministic function of $X_{\le i}$ and $R_{\le i}$).
The following is a notion of information cost for streaming algorithms originally defined by Braverman, Garg, and Woodruff~\cite{BravermanGW20}.
\begin{definition}\label{def:information_cost}
Let $\cA$ be a randomized streaming algorithm.
Given a distribution $\cD$ over input sequences of length $s$, we define the \emph{information cost} of algorithm $\cA$ on input $(X_1, \dots, X_s) \sim \cD$ to be
\[
\IC(\cA , \cD) := \sum_{i=1}^{s}\sum_{j=1}^{i} I(M_i; X_{j} \mid M_{j-1}).
\]
\end{definition}
The above definition of the information cost is motivated by the following chain of inequalities:  
\begin{align*}
  \E |M_i| &\ge H(M_i) && \text{(source coding theorem\footnotemark )}\\
  {}&\ge I(M_i ; X_{\le i}, M_{<i})\\
  {}&= \sum_{j=1}^i I(M_i; X_j, M_{j-1} \mid X_{<j}, M_{<j-1})\\
  {}&= \sum_{j=1}^i I(M_i;M_{j-1}\mid X_{<j},M_{<j-1}) + I(M_i;X_j\mid X_{<j},M_{<j})\\
  {}&= \sum_{j=1}^i I(M_i;X_j\mid X_{<j},M_{<j})\\
  {}&= \sum_{j=1}^i I(M_i;X_j\mid M_{j-1}). 
\end{align*}
\footnotetext{The source coding theorem holds for any prefix code. In general we may lose a factor of $2$ in this inequality: we have both $\E[|M_i|]\ge H(M_i\mid |M_i|)$ and $\E[|M_i|]=\sum_{n\ge 1} np_n=\sum_{n\ge 1} p_n\log(1/2^{-n})=\sum_{n\ge 1} p_n\log(p_n/2^{-n})+H(|M_i|)\ge H(|M_i|)$ for $p_n:=\bP(|M_i|=n)$; consequently $2\E[|M_i|]\ge H(M_i\mid |M_i|)+H(|M_i|)\ge H(M_i)$.}
This implies that $\IC(\cA; \cD)\le \sum_{i=1}^s \E |M_i|$. 
The main technical part of this paper is proving the following lower bound for a single counter using this notion of information cost.
\begin{lemma}\label{lem:single_information_lowerbound}
Let $\cA$ be a $(\eps,\delta)$-approximate counter with parameters $\delta \in (0,c_1)$ and $\epsilon \in (0, \frac{c_2}{\log\log(1/\delta)})$ that uses $|M|$ bits of space. 
There is a distribution $\cD$ over inputs such that the information cost of $\cA$ on $(X_1, \dots, X_n) \sim \cD$ satisfies
\[
\IC(\cA ; \cD) = \Omega(n \log\log (1/\delta) ),
\]
where $\bP[\sum_{i=1}^n X_i \le n] = 1$  and $n \ge c_3\log^{c_4}(1/\delta)$.
This implies the space lower bound
\[
\frac{1}{n} \sum_{i=1}^{n} \bE |M_i| = \Omega(\log\log(1/\delta)) = \Omega(\min\{\log n, \log\log(1/\delta)\}).
\]
\end{lemma}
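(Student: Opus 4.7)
The plan is to design a hard distribution $\cD$ consisting of $L = \Theta(\log\log(1/\delta))$ interleaved \emph{scales} of independent random increments, and then argue by contradiction that the information cost must be $\Omega(nL)$. Concretely, fix a constant base $B \ge 2$ (e.g., $B=4$) and choose $L$ so that $B^L = \Theta(\log(1/\delta))$. For each scale $l \in [L]$ and each integer multiple of $B^l$ up to $n$, place an independent uniform sample from $\{0, B^l\}$ at that position (assigning each multiply-covered position to its highest scale); fill the remaining positions with $0$. A suitable normalization then enforces $\sum_i X_i \le n$ almost surely, and the expected contribution of each scale to the total count is $\Theta(n/L)$, for $\Theta(n)$ overall. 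When $n$ is much larger than $L \log^2(1/\delta)$, we can instead view $\cD$ as a concatenation of independent hard blocks of length $\Theta(L\log^2(1/\delta))$, to which the single-block argument below applies via superadditivity (Proposition~\ref{prop:superadd}).

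To lower-bound $\IC(\cA,\cD)$, I will keep only the ``most recent scale-$l$ input'' terms: for each $i$ and each $l$, let $\lfloor i\rfloor_l$ denote the largest scale-$l$ position at most $i$, and observe
\[
\IC(\cA,\cD) \;\ge\; \sum_{i=1}^n \sum_{l=1}^L I\!\left(M_i ;\, X_{\lfloor i\rfloor_l} \,\bigm|\, M_{\lfloor i\rfloor_l - 1}\right).
\]
Assume for contradiction that this is at most $c_0 n L$ for a sufficiently small constant $c_0$. Averaging over $l$ yields a scale $l^*$ whose contribution is at most $c_0 n$; by Markov, for a $(1-o(1))$-fraction of indices $i$, the corresponding conditional mutual information is at most a small constant $\eta$. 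Pinsker's inequality converts this into a pointwise TV statement: conditional on the pair $(M_{\lfloor i\rfloor_{l^*}-1}, M_i)$, the law of $X_{\lfloor i\rfloor_{l^*}}$ is on average within TV distance $O(\sqrt{\eta})$ of uniform on $\{0, B^{l^*}\}$. The algorithm's Markov structure (Figure~\ref{fig:general-stream}) implies that, conditional on the full trajectory $(M_i)_{i=0}^n$, all scale-$l^*$ inputs are mutually independent, so these per-input closeness statements compose: with non-negligible probability over the trajectory, a ``free set'' $S$ of scale-$l^*$ positions of size $m = \Omega(n/(LB^{l^*}))$ has joint conditional law close to uniform on $\{0,B^{l^*}\}^{|S|}$.

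With $B^{l^*} = \Theta(\log(1/\delta))$, we ensure $m = O(\log(1/\delta))$, so the two events ``$X_j = 0$ for all $j \in S$'' and ``$X_j = B^{l^*}$ for all $j \in S$'' each occur with probability at least $2^{-O(m)} \gg \delta$, yet lead to the same distribution over the final memory state $M_n$ and hence the same query answer. However, the true counts under these two events differ by $mB^{l^*} = \Omega(n/L)$, which exceeds $\eps N$ for $\eps \le c_2/L = c_2/\log\log(1/\delta)$ (exactly the hypothesis). This contradicts the $(\eps,\delta)$ guarantee of $\cA$ and yields $\IC(\cA,\cD) = \Omega(nL)$; the space bound then follows from $\IC(\cA,\cD) \le \sum_i \E|M_i|$ already established in the excerpt. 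The main obstacle I anticipate is the composition step: converting per-input TV closeness (under conditioning on both the incoming and outgoing memory states) into joint closeness to a product distribution over $S$, while simultaneously controlling the noise from lower-scale inputs interleaved between consecutive scale-$l^*$ positions. The latter has standard deviation $O(\sqrt{l^* B^{l^*}}) \ll B^{l^*}$, so Chebyshev's inequality will bound its interference with the sum difference we are trying to exhibit, but interleaving this bound with the information-theoretic bookkeeping will require careful accounting.
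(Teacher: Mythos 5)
Your plan is essentially the same as the paper's: interleave $L=\Theta(\log\log(1/\delta))$ scales of independent $\{0,B^\ell\}$-valued increments, restrict the information cost sum to the ``most recent scale-$\ell$'' terms, pick a cheap scale $\ell^*$ by averaging, convert the low conditional mutual information into closeness of the posterior on each $Z_j$ to uniform via Pinsker, and then use conditional independence across $j$ (given the memory trajectory) to exhibit two sets of scale-$\ell^*$ inputs that lead to the same memory state but counts differing by $\Omega(n/L)$, each with probability $\gg\delta$. The contradiction with the $(\eps,\delta)$ guarantee for $\eps\le c_2/L$ matches the paper.

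However, the composition step you flag as the ``main obstacle'' is exactly where your proposal, as phrased, has a genuine gap, and the paper's resolution is different from what you sketch. You claim the \emph{joint} conditional law over the free set $S$ is close to uniform in TV. That step does not go through with your parameter budget: per-marginal TV closeness of order $\eta\sim\sqrt{c_0}$ tensorizes additively, so the joint law is only within TV distance $m\eta$, and with $m=\Theta(\log(1/\delta))$ this forces $c_0\ll 1/\log^2(1/\delta)$, which would only give $\IC=\Omega(nL/\log^2(1/\delta))$, not $\Omega(nL)$. The paper avoids joint TV closeness entirely: it introduces the anti-concentration functional $f(P,\Delta):=\max\{\gamma : \exists L', P((-\infty,L'])\ge\gamma,\ P([L'+\Delta,\infty))\ge\gamma\}$, proves the multiplicative convolution inequality $f(P\star Q,\Delta_1+\Delta_2)\ge f(P,\Delta_1)f(Q,\Delta_2)$, and for each $j\in J_1$ lower-bounds $f$ of the per-window contribution (including the interference from $Z_j^<$ and $Z_j^>$) by an absolute constant $1/16$, so the product over $J_1$ and $J_1^c$ gives $(1/32)^T\gg\delta$ directly. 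Note that your subsequent line ``each occur with probability at least $2^{-O(m)}$'' is what you actually need and follows from per-marginal closeness plus conditional independence alone, so the \emph{conclusion} of your composition step is fine; it is the stated intermediate claim (joint TV closeness) that is wrong and should be replaced by a per-$j$ anti-concentration argument.

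A second, smaller gap is the handling of the interference terms. You correctly note via Chebyshev that $\mathrm{sum}(Z_j^<)$ has standard deviation $o(B^{\ell^*})$, but the paper additionally needs a case analysis on $\mathrm{sum}(Z_j^>)$ (whether it equals its median with probability $\ge 7/8$ or not) together with a good-events construction ($E_{j,1}$ for the Chebyshev control, $E_{j,2}$ for the TV control, and $E_2$ for the simultaneous occurrence on a large $J_1\subseteq J_0$) before the $f$-bound can be applied. Your proposal acknowledges this accounting is needed but doesn't carry it out; it is not a routine step, because the higher-scale inputs in the window are \emph{not} small relative to $B^{\ell^*}$, so you cannot absorb them by a variance bound the way you absorb the lower-scale ones.

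Finally, two minor points: (i) the paper does not concatenate independent hard blocks; it simply sets $n=TL$ with $T<\log_{32}(1/\delta)$ and obtains the per-step space bound $\frac1n\sum_i\E|M_i|$ directly, so your block-decomposition detour is unnecessary (though harmless); (ii) the paper's interleaving reserves a dedicated slot per scale in each length-$L$ block (stream $(Y_{1,1},\dots,Y_{L,1},Y_{1,2},\dots)$ of length $TL$) rather than placing samples at multiples of $B^\ell$ in a single index set, which makes the decomposition of each window into $(Z_j^<,Z_j,Z_j^>)$ clean; your variant would require extra care to get the same structure.
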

\begin{proof}
We construct the distribution $\cD$ as follows: let $B\in \mathbb{N}$ be a large integer constant to be specified later, and $T$ is the largest power of $B$ such that $ T < \log_{32}(1/\delta)$. When $\delta < c_1$ where $c_1 = c_1(B)$ is a sufficiently small constant, we have $T \ge B$ and so $L := \log_B T \in \mathbb{N}$. The distribution $\cD$ is an interleaving of $L$ distributions $\cD_\ell$ on $L$ different scales, where for each scale $\ell\in [L]$, the distribution $\cD_\ell$ is a product distribution $\prod_{i=1}^{ T }\cD_{\ell, i}$: 
\begin{align*}
\text{ under } \cD_{\ell, i}, \quad Y_{\ell, i} \begin{cases}
\sim \text{Unif}(\{0, B^\ell \}) &\text{if } B^\ell \text{ divides } (i-1), \\
= 0 &\text{otherwise.}
\end{cases} 
\end{align*}
For notational simplicity, we also denote the non-zero entries of $Y_{\ell,i}$ by $Z_{\ell, j} = Y_{\ell, B^{\ell}(j-1)+1}$ for $j\in [T/B^\ell]$.
The stream under distribution $\cD$ is then generated by interleaving the $Y_{\ell, i}$ terms to form the sequence $X^{ n }=(Y_{1,1},\cdots, Y_{L,1}, Y_{1,2}, \cdots, Y_{L,2}, \cdots Y_{L,T })$, where $n := TL$ is the length of the sequence.
The total value of the counter under this stream is at most
\begin{align*}
\sum_{\ell=1}^L B^\ell\cdot \frac{T}{B^{\ell}} = n. 
\end{align*}
Finally, note that $n = TL \ge c_3\log^{c_4}(1/\delta)$ for an appropriate choice of constants $c_3, c_4 \ge 0$.

Let $\cA$ be an $(\eps,\delta)$-approximate counter, and assume towards contradiction that $\IC(\cA;\cD) < \alpha n L$ where $\alpha$ is a small constant.
For any index $i\in [n]$ in the stream, let $\rndd{i} := B^\ell L \lfloor(i-1)/(B^\ell L) \rfloor+\ell$ be the index of the closest $Z_{\ell, j}$ to the left of $X_i$ in the stream, i.e.\ $X_{\rndd{i}} = Z_{\ell, j}$.
Similarly, for $j\in [T/B^{\ell}]$, we define $\rndu{j} := B^{\ell}L(j-1) + \ell$ to be the index of the $j$th non-zero entry of scale $\ell$ in the stream, i.e.\ $X_{\rndu{j}} = Z_{\ell,j}$.
The definition of the information cost tells us that
\begin{align*}
    \alpha n L  &> \IC(\cA; \cD) \\
    &= \sum_{i=1}^n \sum_{j\le i} I(M_i; X_j \mid M_{j-1}) \\
    &\ge \sum_{i=1}^n \sum_{\ell=1}^L I(M_i; X_{\rndd{i}} \mid M_{\rndd{i}-1}) \\
    &= \sum_{\ell=1}^L \sum_{j=1}^{T/B^\ell}\sum_{i = \rndu{j}}^{\rndu{j+1} - 1} I(M_i; X_{\rndu{j}} \mid M_{\rndu{j} -1}) \\
    &\ge \sum_{\ell=1}^L \sum_{j=1}^{T/B^\ell} B^{\ell}L\cdot I(M_{\rndu{j+1} - 1}; X_{\rndu{j}} \mid M_{\rndu{j} -1}) \\
    &\ge \min_{\ell\in [L]} B^{\ell} L^{2} \sum_{j=1}^{T/B^\ell} I(M_{\rndu{j+1} - 1}; X_{\rndu{j}}\mid M_{\rndu{j} -1})\\
    &=  \min_{\ell\in [L]} B^{\ell} L^{2}  \sum_{j=1}^{T/B^\ell} I(M_{\rndu{j+1} - 1}; Z_{\ell,j}\mid M_{\rndu{j} -1})
\end{align*}
where the second last inequality is due to the data processing inequality. 
By Markov's inequality, there exist $\ell_0\in [L]$ and $J_0\subseteq [ T /B^{\ell_0}]$ with $|J_0|\ge T /(2B^{\ell_0})$ and 
\begin{align*}
I(M_{\lceil j+1 \rceil_{\ell_0}-1}; Z_{\ell_0, j} \mid M_{\lceil j \rceil_{\ell_0}-1})\le 2\alpha, \qquad \forall j\in J_0. 
\end{align*}

For ease of presentation, we abuse notation and write the above inequality as $I(M_j; Z_j \mid M_{j-1})\le 2\alpha$ for $j\in J_0$. We shall also keep in mind that the stream between $M_{j-1}$ and $M_j$ contains $(Z^{<}_{j}, Z_j, Z^{>}_{j})$, where $Z^{<}_{j}$ is the collection of all non-zero inputs $\{Z_{\ell', j'}\}$ inside this window with a lower scale $\ell'<\ell$, and $Z^{>}_{j}$ is the counterpart with a higher scale $\ell'>\ell$ (see Figure~\ref{fig:interleaving}).

Next we define several good events for the sake of analysis. The first good event $E_{j,1}$ characterizes the behavior of the contribution of $Z^{<}_{j}$ and is formally defined as
\begin{align*}
    \mathbbm{1}(E_{j,1}) := \mathbbm{1}\left( \left|\text{sum}(Z^{<}_{j}) - \bE[\text{sum}(Z^{<}_{j}) \mid M_{j-1}, M_j] \right| \le \frac{B^{\ell_0}}{4} \right), \quad j\in J_0. 
\end{align*}
By Chebyshev's inequality, it is clear that
\begin{align*}
\bE_{M_{j-1},M_j}[ \bP(E_{j,1}^c\mid M_{j-1}, M_j) ] &\le \bE_{M_{j-1},M_j}\left[\frac{\var(\text{sum}(Z^{<}_{j}\mid M_{j-1},M_j) }{(B^{\ell_0}/4)^2} \right] \\
&\le \frac{\var(\text{sum}(Z^{<}_{j}))}{(B^{\ell_0}/4)^2} = \frac{16}{B^{2\ell_0}}\sum_{\ell<\ell_0} B^{\ell_0-\ell} \cdot \frac{B^{2\ell}}{4} \le \frac{4}{B-1}. 
\end{align*}
Consequently, it holds that
\begin{align*}
&\bE_{M_{j-1},M_j}[\TV(P_{Z_j}, P_{Z_j\mid M_{j-1},M_j,E_{j,1}} )] \\
&= \bE_{M_{j-1},M_j}[\TV(P_{Z_j\mid M_{j-1}}, P_{Z_j\mid M_{j-1},M_j,E_{j,1}} )] \\
&\le \bE_{M_{j-1},M_j}[\TV(P_{Z_j\mid M_{j-1}}, P_{Z_j\mid M_{j-1},M_j} )+\TV(P_{Z_j\mid M_{j-1},M_j}, P_{Z_j\mid M_{j-1},M_j,E_{j,1}} )] \\
&= \bE_{M_{j-1},M_j}[\TV(P_{Z_j\mid M_{j-1}}, P_{Z_j\mid M_{j-1},M_j} )] + \bE_{M_{j-1},M_j}[\bP(E_{j,1}^c\mid M_{j-1}, M_j)] \\
&\le \sqrt{\frac{1}{2}\bE_{M_{j-1},M_j}[\KL(P_{Z_j\mid M_{j-1}}\| P_{Z_j\mid M_{j-1},M_j} )] } + \frac{4}{B-1} \\
&= \sqrt{I(Z_j; M_j\mid M_{j-1})} + \frac{4}{B-1} \\
&\le \sqrt{\alpha} + \frac{4}{B-1}, 
\end{align*}
which can be made small by choosing $\alpha>0$ small enough and $B\in \mathbb{N}$ large enough. Note that in the above inequality we have used the triangle inequality $\TV(P,Q)\le \TV(P,R)+\TV(Q,R)$, the conditioning relationship $\TV(P, P_{\mid E}) = P(E^c)$, Pinsker's inequality $\TV(P,Q)\le \sqrt{\KL(P\|Q)/2}$, and Jensen's inequality $\E[\sqrt{X}]\le \sqrt{\E[X]}$. 

The next good event $E_2$ concerns the simultaneous occurrence of $\{E_{j,1}\}$ for a constant proportion of $j\in J_0$. Specifically, $E_2$ is the event that there exists some $J_1\subseteq J_0$ such that:
\begin{enumerate}
    \item $|J_1|\ge |J_0|/2 \ge T /(4B^{\ell_0})$; 
    \item event $E_{j,1}$ is true for all $j\in J_1$; 
    \item a small TV distance $\TV(P_{Z_j}, P_{Z_j\mid M_{j-1},M_j,E_{j,1}} )\le 1/4$ (denoted by event $E_{j,2}$) for all $j\in J_1$. 
\end{enumerate}
Note that $\{(Z^{<}_{j},Z_j,Z^{>}_{j})\}$ are conditionally independent given $\{M_j\}$, we have
\begin{align*}
&\bE_{\{M_j\}} \left[\sum_{j\in J_0} \mathbbm{1}(E_{j,1}\cap E_{j,2}) \right] \\
&= \sum_{j\in J_0} \bE_{M_{j-1},M_j}[\mathbbm{1}(E_{j,1}\cap E_{j,2})] \\
&\ge \sum_{j\in J_0} \left(1 - \bE_{M_{j-1},M_j}[ \bP(E_{j,1}^c\mid M_{j-1}, M_j) ] 
- 4\cdot \bE_{M_{j-1},M_j}[\TV(P_{Z_j}, P_{Z_j\mid M_{j-1},M_j,E_{j,1}} )]\right)\\
&\ge \left(1 - \frac{4}{B-1} - 4\left(\sqrt{\alpha}+\frac{4}{B-1}\right)\right)|J_0| \\
&\ge \frac{3}{4}|J_0|, 
\end{align*}
by choosing $\alpha$ small enough and $B$ large enough. Consequently, by Markov's inequality, we have $\bP(E_2)\ge 1/2$ over the randomness of $\{M_j\}$ and $\{(Z^{<}_{j},Z_j,Z^{>}_{j})\}$. 

Now we condition on $E_2$ and arrive at the desired contradiction. For a probability distribution $P$ over the real line and $\Delta\ge 0$, define a quantity $f(P,\Delta)$ as follows:
\begin{align*}
    f(P,\Delta) = \max\{\delta>0: \exists L\in \mathbb{R} \text{ such that } P((-\infty,L])\ge \delta, P([L+\Delta,\infty)) \ge \delta \}. 
\end{align*}

Intuitively, a \emph{small} $f(P,\Delta)$ implies that the distribution $P$ assigns a lot of probability to \emph{some} interval of length $\Delta$.
The following lemma summarizes some properties of $f(P,\Delta)$. 
\begin{lemma}\label{lemma:f_P_Delta}
Let $P$ and $Q$ be two probability distributions over $\mathbb{R}$, and $P \star Q$ denote their convolution. For $\Delta_1, \Delta_2, \Delta\ge 0$, it holds that
\begin{align*}
f(P\star Q, \Delta_1+\Delta_2) &\ge f(P,\Delta_1)f(Q,\Delta_2), \\
f(P\star Q, \Delta) &\ge f(P, \Delta)/2. 
\end{align*}
\end{lemma}
\begin{proof}
For the first claim, suppose that
\begin{align*}
\min\{P((-\infty,L_1)], P([L_1+\Delta_1,\infty))\} &\ge f(P,\Delta_1), \\
\min\{Q((-\infty,L_2)], Q([L_2+\Delta_2,\infty))\} &\ge f(Q,\Delta_2). 
\end{align*}
Then the first inequality follows from
\begin{align*}
P\star Q((-\infty, L_1+L_2]) &\ge P((-\infty,L_1])Q((-\infty,L_2]) \ge f(P,\Delta_1)f(Q,\Delta_2), \\
P\star Q([L_1+L_2+\Delta_1+\Delta_2,\infty)) &\ge P([L_1+\Delta_1,\infty))Q([L_2+\Delta_2,\infty)) \ge f(P,\Delta_1)f(Q,\Delta_2). 
\end{align*}
The second claim is a direct consequence of the first claim and $f(Q,0)\ge 1/2$. 
\end{proof}

To apply Lemma \ref{lemma:f_P_Delta}, we consider the conditional distribution $P_{S\mid \{M_j\}}$, where $S=\sum_{i=1}^n X_i$ is the total number of counter updates, and $\{M_j\}$ are the memory states at scale $\ell_0$ defined before. Since the counter $\cA$ is $(\eps,\delta)$-approximate, in expectation $P_{S\mid \{M_j\}}$ must have probability mass at least $1-\delta$ on an interval of size $2\eps n$. This implies that 
\begin{align}\label{eq:f_upper_bound}
    \bE_{\{M_j\}}[f(P_{S\mid \{M_j\}}, 2\epsilon n)] \le \delta. 
\end{align}

On the other hand, since $\{(Z^{<}_{j},Z_j,Z^{>}_{j})\}$ are conditionally independent given $\{M_j\}$, we may invoke Lemma \ref{lemma:f_P_Delta} (first part for $J_1$ and second part for $J_1^c$) to arrive at
\begin{align*}
\bE_{\{M_j\}}[f(P_{S\mid \{M_j\}}, 2\epsilon n)] &\ge \bP(E_2)\bE_{\{M_j\}}[f(P_{S\mid \{M_j\}}, 2\epsilon n)\mid E_2] \\
&\ge \frac{1}{2}\cdot \bE_{\{M_j\}}\left[\prod_{j\in J_1} f(P_{\text{sum}(Z^{<}_{j},Z_j,Z^{>}_{j}) \mid M_{j-1},M_j }, \frac{2\eps n}{|J_1|}) \cdot \left(\frac{1}{2}\right)^{\frac{T}{B^{\ell_0}} - |J_1|} \mid E_2\right]  \\
&\ge \frac{1}{2^T}\cdot \bE_{\{M_j\}}\left[\prod_{j\in J_1} f(P_{\text{sum}(Z^{<}_{j},Z_j,Z^{>}_{j}) \mid M_{j-1},M_j }, 8\eps LB^{\ell_0}) \mid E_2\right]. 
\end{align*}
Conditioned on the event $E_2$, the event $E_{j,1}$ implies that the deviation of $\text{sum}(Z^{<}_{j})$ to its posterior mean is at most $B^{\ell_0}/4$, and the event $E_{j,2}$ implies that the posterior (marginal) distribution of $Z_j$ puts at least $1/4$ probability mass on both $0$ and $B^{\ell_0}$. Moreover, $\text{sum}(Z^{>}_{j})$ is always an integral multiple of $B^{\ell_0+1}$. Now we prove that 
\begin{align*}
    f(P_{\text{sum}(Z^{<}_{j},Z_j,Z^{>}_{j}) \mid M_{j-1},M_j,E_{j,1},E_{j,2} }, \frac{B^{\ell_0}}{3}) \ge \frac{1}{16}.
\end{align*}
We distinguish into two cases: 
\begin{enumerate}
    \item Case I: $\bP(\text{sum}(Z^{>}_{j}) \neq \text{median}(\text{sum}(Z^{>}_{j}))) \ge 1/8$. As $\text{sum}(Z^{>}_{j})$ is always an integral multiple of $B^{\ell_0+1}$, this implies $|\text{median}(\text{sum}(Z^{>}_{j})) - \text{sum}(Z^{>}_{j})|\ge B^{\ell_0+1}/2$ with probability at least $1/8$. By symmetry, without loss of generality we may assume that $\text{median}(\text{sum}(Z^{>}_{j})) - \text{sum}(Z^{>}_{j})\ge B^{\ell_0+1}/2$ with probability at least $(1/8)/2=1/16$. 
    
    Moreover, the range of $Z_j$ is $B^{\ell_0}$, and the range of $\text{sum}(Z^{<}_j)$ is at most $B^{\ell_0}/2$ under $E_{j,1}$. Consider the interval
    \begin{align*}
        \left[\text{median}(\text{sum}(Z_j^<, Z_j, Z^{>}_{j}), \text{median}(\text{sum}(Z_j^<, Z_j, Z^{>}_{j})) + \frac{B^{\ell_0}}{3} \right]
    \end{align*}
    of length $B^{\ell_0}/3$, it is clear that
    \begin{align*}
        &\bP\left(\text{sum}(Z_j^<, Z_j, Z^{>}_{j}) \le \text{median}(\text{sum}(Z_j^<, Z_j, Z^{>}_{j}))\right) \ge \frac{1}{2}, \\
        &\bP\left(\text{sum}(Z_j^<, Z_j, Z^{>}_{j}) \ge \text{median}(\text{sum}(Z_j^<, Z_j, Z^{>}_{j})) + \frac{B^{\ell_0}}{3}\right) \\
        &\ge \bP\left(\text{sum}(Z^{>}_{j}) \ge \text{median}(\text{sum}(Z^{>}_{j})) + \frac{3B^{\ell_0}}{2} + \frac{B^{\ell_0}}{3}\right) \ge \frac{1}{16},
    \end{align*}
    as long as $B^{\ell_0+1}/2 \ge 11B^{\ell_0}/6$, or equivalently $B\ge 4$. 
    \item Case II: $\bP(\text{sum}(Z^{>}_{j}) = \text{median}(\text{sum}(Z^{>}_{j}))) \ge 7/8$. In this case, we argue that 
\begin{align*}
\bP\left(\text{sum}(Z^{<}_{j}) + Z_j + \text{sum}(Z^{>}_{j}) \le \bE[\text{sum}(Z^{<}_{j})] + \frac{B^{\ell_0}}{3} + \text{median}(\text{sum}(Z^{>}_{j})) \right) &\ge \frac{1}{16}, \\
\bP\left(\text{sum}(Z^{<}_{j}) + Z_j + \text{sum}(Z^{>}_{j}) \ge \bE[\text{sum}(Z^{<}_{j})] + \frac{2B^{\ell_0}}{3} + \text{median}(\text{sum}(Z^{>}_{j})) \right) &\ge \frac{1}{16}. 
\end{align*}
By symmetry we only prove the first claim, where the event occurs whenever $Z_j=0$ and $\text{sum}(Z^{>}_{j})$ is equal to its median. By the union bound, this happens with probability at least $1 - (1/8 + 3/4) = 1/8$. 
\end{enumerate}

Consequently, as long as $\eps\le 1/(24L)=O(1/\log\log(1/\delta))$, we have 
\begin{align}\label{eq:f_lower_bound}
    \bE_{\{M_j\}}[f(P_{S\mid \{M_j\}}, 2\epsilon n)] \ge \left(\frac{1}{32}\right)^T.
\end{align}
However, as $T<\log_{32}(1/\delta)$, inequalities \eqref{eq:f_upper_bound} and \eqref{eq:f_lower_bound} are contradictory to each other. Thus, the assumption that $\IC(\cA;\cD) < \alpha n L$ cannot be true. 
\end{proof}

\begin{figure}[H]
    \input{interleaving.tikz}
    \caption{A simple example that illustrates the interleaving process for our random stream. We set the parameters $B = 2$ and $T = 8$ so that we sample from $L = 3$ scales to get a stream of total length $n = 24$. We do not illustrate the entire stream to save space. In this example we consider $\ell_0 = 2$ and demonstrate what a typical $Z_j$ looks like. The sequence of random variables inside the red dotted box/window consists of the terms $(Z^{<}_{j}, Z_{j}, Z^{>}_{j} )$ and we highlight the memory states $M_{j-1}, M_j$.}
    \label{fig:interleaving}
\end{figure}

\subsection{Amortized space complexity via direct sum}

We will now ``lift'' the information lower bound for any $(\eps,\delta)$-approximate counter to a lower bound for any $(k,\eps,\delta)$-approximate counter from which we can derive a lower bound on the memory size.
The proof follows from a simple direct sum argument applied to the information cost.
\begin{lemma}\label{lem:multiple_information_lowerbound}
Any $(k,\eps,\delta)$-approximate counter with parameters $\delta \in (0,c_1)$ and $\epsilon \in (0, \frac{c_2}{\log\log(1/\delta)})$ that maintains $k$ counters with total sum at most $n\geq k (c_3 \log^{c_4}(1/\delta))$ must use
\[
\Omega(k\log\log(1/\delta)) = \Omega(k\min\{\log(n/k) , \log\log(1/\delta)\})
\]
bits of memory in expectation.
\end{lemma}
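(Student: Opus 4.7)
The plan is a direct-sum reduction from Lemma \ref{lem:single_information_lowerbound} using the superadditivity of mutual information (Proposition \ref{prop:superadd}). The key twist is to feed $\cA$ in \emph{rounds} of $k$ blocks, one block per counter, so that each term in the corresponding round-level information cost captures information about all $k$ counters at once; this is what produces the extra factor of $k$.

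Set $n' := n/k$; by hypothesis $n' \ge c_3\log^{c_4}(1/\delta)$, so Lemma \ref{lem:single_information_lowerbound} applies with stream length $n'$. Let $\cD$ be its hard distribution. Sample $X^{(1)},\dots,X^{(k)}\sim\cD$ independently and feed them to $\cA$ in $n'$ rounds: in round $j$, for each $c=1,\dots,k$ in order, turn $X^{(c)}_j$ into $X^{(c)}_j$ consecutive increments to counter $c$. The total number of increments is at most $kn'=n$, as required. Let $M_j$ denote the state of $\cA$ at the end of round $j$ and set $Z_j := (X^{(1)}_j,\dots,X^{(k)}_j)$. For each $c\in[k]$, let $\cA_c$ denote the single-counter algorithm on input $X^{(c)}$ that samples the other $X^{(c')}$ as internal randomness, simulates $\cA$ on the interleaved round-by-round stream, and forwards queries to $\cA$'s query on counter $c$. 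Because $\cA$'s query guarantee applies simultaneously to every counter at every time and the sampled streams are independent of $X^{(c)}$, $\cA_c$ is an $(\eps,\delta)$-approximate counter whose round-end states coincide with $M_0,\dots,M_{n'}$. Lemma \ref{lem:single_information_lowerbound} then gives
\[
\IC(\cA_c;\cD) \;=\; \sum_{i=1}^{n'}\sum_{j\le i} I(M_i; X^{(c)}_j\mid M_{j-1}) \;=\; \Omega(n'\log\log(1/\delta)).
\]

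Now define the batched information cost $\IC_{\mathrm{batch}}(\cA) := \sum_{i=1}^{n'}\sum_{j\le i} I(M_i; Z_j\mid M_{j-1})$. Since $X^{(1)}_j,\dots,X^{(k)}_j$ are mutually independent and independent of $M_{j-1}$ (a function of earlier rounds and $\cA$'s coins), Proposition \ref{prop:superadd} yields $I(M_i; Z_j\mid M_{j-1}) \ge \sum_c I(M_i; X^{(c)}_j\mid M_{j-1})$, so
\[
\IC_{\mathrm{batch}}(\cA) \;\ge\; \sum_{c=1}^{k}\IC(\cA_c;\cD) \;=\; \Omega(n\log\log(1/\delta)).
\]
Replaying the source-coding chain preceding Lemma \ref{lem:single_information_lowerbound} on the round-level stream $(Z_j)$ and states $(M_j)$ gives $\IC_{\mathrm{batch}}(\cA) \le \sum_{j=1}^{n'}\E|M_j|$. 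Dividing by $n'$ produces the advertised average round-end memory of $\Omega(k\log\log(1/\delta))$ bits. The main subtlety — and where the $k$ factor is gained — is to operate at the round level: a naive direct sum over individual increments would only yield $\Omega(\log\log(1/\delta))$ average memory per increment, losing the factor $k$.
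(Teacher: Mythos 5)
Your proposal is correct and matches the paper's proof in all essentials: both reduce to Lemma~\ref{lem:single_information_lowerbound} by feeding $k$ independent copies of the hard single-counter stream $\cD$ in $n'=n/k$ rounds, invoking Proposition~\ref{prop:superadd} to decompose the round-level information cost coordinate-wise, and embedding the single-counter instance into one coordinate (with the other coordinates simulated as internal randomness). The only cosmetic difference is that the paper averages over a uniformly random coordinate $U\in[k]$ and multiplies by $k$, whereas you sum the bound over each coordinate $c$ directly; these are the same argument.
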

\begin{proof}
We prove the lower bound via a reduction to the information lower bound for a single counter.
That is, we will embed updates from the hard distribution $\cD$ used to prove the single counter lower bound in Lemma~\ref{lem:single_information_lowerbound} into the $k$ counter problem.
Fix a  $(k,\eps, \delta)$-approximate counter  $\cA_k$.
Since the distribution $\cD$ is a product distribution, we can write $\cD = \prod_{i=1}^{n'} \cD_i$ for $n':=n/k$.
We define $X^{k}_{i} := (X_{i,1}, \dots, X_{i,k}) \sim (\mathcal D_{i})^{k}$ for each $i \in [n']$.
We consider $X^{k}_{i}$ to be the $i$th input to $\cA_k$ where $X_{i,j}$ is the update applied to $j$th counter.
The overall input to $\cA_k$ is
\[
(X^{k}_{1}, \dots, X^{k}_{n'}) \sim \prod_{i=1}^{n'}(\cD_{i})^k,
\]
and the total sum of all counts is at most $n = kn' \ge k(c_3\log^{c_4}(1/\delta))$ (see Lemma~\ref{lem:single_information_lowerbound}).
Define $\cD_{\text{int}} := \prod_{i=1}^{n'}(\cD_{i})^k$.
Let $M_{i}$ be the memory state of the algorithm after processing the $i$th input $X^{k}_{i}$.
We can lower bound the information cost incurred by $\cA_k$:
\[
\IC(\cA_k , \cD_{\text{int}}) = \sum_{i=1}^{n'}\sum_{j=1}^{i} I(M_i; X^{k}_{j}\mid M_{j-1}) \ge \sum_{u = 1}^{k} \sum_{i=1}^{n'}\sum_{j=1}^{i} I(M_i; X_{j,u} \mid M_{j-1}),
\]
where the inequality follows from the superadditivity of conditional mutual information (Proposition~\ref{prop:superadd}).

We now explain how we embed the single counter problem into the $k$ counter problem using $\cA_k$.
Given the input $X^{n'} = (X_1, \dots, X_{n'}) \sim \cD$ for a single counter, we pick an index $u \in [k]$ uniformly at random and proceed as if the updates are applied to $u$th counter of $\cA_k$.
Denote by $U$ this uniformly random index.
For the other counters, we simulate ``fake'' updates from the same distribution and apply them to $\cA_k$ as if they were received as inputs.
Denote by $\cA'$ the resulting approximate counter that maintains $X^{n'}$.
We can upper bound the information cost of $\cA'$ by
\begin{align*}
\IC(\cA' , \cD) =  \sum_{i=1}^{n'}\sum_{j=1}^{i} I(M_i; X_{j,U} \mid M_{j-1}, U) &= \frac{1}{k} \sum_{u=1}^{k}\sum_{i=1}^{n'}\sum_{j=1}^{i} I(M_i; X_{j,u} \mid M_{j-1})\\
&\le \frac{1}{k} \sum_{i=1}^{n'}\sum_{j=1}^{i} I(M_i; X^{k}_{j}\mid M_{j-1})\\
&= \frac{\IC(\cA_k , \cD_{\text{int}})}{k}.
\end{align*}
Combining the above with Lemma~\ref{lem:single_information_lowerbound} provides a lower bound on $\IC(\cA' , \cD)$, which further implies the space lower bound
\[
\frac{1}{n'} \sum_{i=1}^{n'} \E[|M_i|] \ge \Omega(k\min\{\log(n/k) , \log\log(1/\delta)\})
\]
for the claimed range of $\eps$ and $\delta$.
\end{proof}

\subsection{Offline lower bound}
We now state and prove the remaining part of the lower bound.
\begin{lemma}
For any $\eps \in (0,c_1)$, $\delta \in (0, c_2)$ and $1<k\le n$,
any $(k,\eps, \delta)$-approximate counter $\cA$ that maintains $k$ counters with total sum at most $n$ must use
\[
|M| = \Omega\left(\min\{k\log(n/k), k\log(1/\eps) + k\log\log(n/k)\}\right)
\]
bits of memory.
\end{lemma}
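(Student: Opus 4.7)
Our plan is to establish this offline lower bound via a direct information-theoretic argument, analogous in spirit to the direct-sum argument of Lemma~\ref{lem:multiple_information_lowerbound} but using a simple static hard distribution rather than interleaved scales. The key observation is that for the offline bound we only need to show that the memory state $M$ retains enough information about each counter individually to pin it down up to a $(1+\eps)$-factor, and this per-counter information can be extracted by Fano's inequality without recourse to the full information-cost machinery.

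\textbf{Step 1 (hard distribution).} First, I would construct a ``distinguishable'' set $S \subseteq \{0, 1, \ldots, \lfloor n/(2k) \rfloor\}$ whose elements have pairwise disjoint $(1 \pm \eps)$-intervals. Concretely, $S$ contains every integer from $0$ up to $\lfloor 1/(2\eps)\rfloor$ (at this scale each integer's $\eps$-interval has width below $1$) and then the geometrically spaced values $s_{j+1} = \lceil (1+2\eps) s_j\rceil$ up to the cap $n/(2k)$. A short calculation gives $|S| = \Theta(\min\{n/k,\ \log(n/k)/\eps\})$, so $\log|S| = \Theta(\min\{\log(n/k),\ \log(1/\eps) + \log\log(n/k)\})$. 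Sample $X_1, \ldots, X_k$ i.i.d.\ uniformly from $S$ and apply $X_i$ increments to counter $i$; since $\sum_i X_i \le k \cdot n/(2k) \le n$ deterministically, the total-sum budget is respected.

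\textbf{Steps 2--3 (information and Fano).} Let $M$ be the final memory state. By the source coding theorem and by superadditivity of mutual information for independent variables (Proposition~\ref{prop:superadd}),
\[
\E[|M|] \ge H(M) \ge I(M; X_1, \ldots, X_k) \ge \sum_{i=1}^k I(M; X_i).
\]
For each $i$, the query output $\hat X_i$ is a (randomized) function of $M$ with randomness independent of the $X_j$'s, so $I(M; X_i) \ge I(\hat X_i; X_i)$ by data processing. Since $\hat X_i$ is a valid $(1+\eps)$-approximation to $X_i$ with probability at least $1-\delta$ and the construction of $S$ ensures that a valid approximation uniquely identifies $X_i \in S$, Fano's inequality gives $H(X_i \mid \hat X_i) \le 1 + \delta \log|S|$. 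Hence $I(\hat X_i; X_i) \ge (1-\delta)\log|S| - 1 = \Omega(\log|S|)$ for $\delta < c_2$ sufficiently small, and summing over $i$ yields the desired $\Omega(k\log|S|)$ lower bound on $\E[|M|]$.

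\textbf{Main obstacle.} The argument itself is short once $S$ is in place, so the bulk of the technical care lies in verifying the claimed cardinality of $|S|$ uniformly across all parameter regimes (small versus moderate $\eps$; small versus large $n/k$). The transition between the ``consecutive-integer'' regime below $\Theta(1/\eps)$ and the ``geometric'' regime above it, together with the truncation at $n/(2k)$, interacts with the minimum in the claimed bound in a way that requires a short but careful case analysis. I do not foresee any new conceptual difficulty beyond this.
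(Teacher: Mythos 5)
Your proposal is correct in its essentials and takes a genuinely different route from the paper. The paper's proof is combinatorial and coding-theoretic: it constructs a large error-correcting code $V \subseteq \mathbf{N}^k$ via the Gilbert--Varshamov bound with minimum Hamming distance $0.9k$, then uses an averaging argument over the random seed to find a fixed seed on which a constant fraction of codewords are ``mostly correct,'' and finally argues that these codewords must be mapped to pairwise distinct memory states, yielding $2^{|M|} \ge |V'|$. This gives a lower bound on the \emph{worst-case} number of memory states. Your proof instead uses i.i.d.\ uniform inputs over the distinguishable set $S$, the chain $\E|M| \ge H(M) \ge I(M;X_1,\ldots,X_k) \ge \sum_i I(M;X_i)$ via superadditivity (Proposition~\ref{prop:superadd}), and Fano's inequality per coordinate. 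This gives a lower bound on the \emph{expected} memory, which is formally a stronger conclusion, and is more in the spirit of the information-theoretic machinery used in the rest of the paper (Lemmas~\ref{lem:single_information_lowerbound}--\ref{lem:multiple_information_lowerbound}). What the paper's approach buys is that it avoids needing a source-coding step and any delicacy about prefix-freeness of the memory encoding; what yours buys is a cleaner per-coordinate argument that bypasses the code-construction and seed-fixing steps entirely.

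Two small points to be careful about when you flesh this out. First, Fano requires that the $(1\pm\eps)$-intervals around the elements of $S$ be pairwise \emph{disjoint} so that a valid approximation decodes uniquely; for the consecutive-integer portion this forces the threshold to be roughly $(1-\eps)/(2\eps)$ rather than $1/(2\eps)$, a constant-factor adjustment. Second, your claim $|S| = \Theta(\min\{n/k,\ \log(n/k)/\eps\})$ is slightly loose as stated in the intermediate regime $k/n < \eps < \sqrt{k/n}$ (where the geometric part contributes $\Theta(\log(\eps n/k)/\eps)$, not $\Theta(\log(n/k)/\eps)$), but the final bound $\log|S| = \Theta(\min\{\log(n/k),\ \log(1/\eps)+\log\log(n/k)\})$ does hold after taking logarithms, since $\log(1/\eps) = \Theta(\log(n/k))$ in that regime. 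This is exactly the ``short but careful case analysis'' you flag, and it parallels the paper's own three-case analysis at the end of its proof.
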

\begin{proof}
Let $N(i) = \lceil (e^{4i \eps} -1)\cdot \eps^{-1}\rceil$.
Some simple calculations show
\begin{align*}
    (1-\eps)\cdot  N(i+1) - (1+\eps) \cdot N(i) &\ge \eps^{-1}(1-\eps)(e^{4 (i+1) \eps}-1) - \eps^{-1}(1+\eps)(e^{4 i \eps}-1) - (1+\eps)\\
    &= \eps^{-1}((1-\eps)e^{\eps} - 1-\eps)e^{4 i \eps} + 1 - \eps\\
    &\ge \eps^{-1}((1-\eps)e^{4\eps}-1-\eps)e^{4 i \eps} \\
    &\ge \eps^{-1}((1-\eps)(1+4\eps)-1-\eps)\cdot 1\\
    &= 2-4\eps > 0. 
\end{align*}
So for any $i \neq i'$, $N(i)$ and $N(i')$ must have a $(1\pm\eps)$ multiplicative gap.
We will consider $k$ counters that take on such values, i.e.\ $k$ counters that receive an (arbitrary) sequence of increments leading to counts $N({i_1}), N({i_2}), \dots, N({i_k})$ respectively. 
It is easy to see that if $i_r \le (4\eps)^{-1} \cdot \ln(1+n\eps/k)$ for all $r \in [k]$, the total sum of all the counters will be at most $n$.
Let $q := \lfloor (4\eps)^{-1} \cdot \ln(1+n\eps/k)  \rfloor$ and define the set of possible counts $\mathbf{N} := \{N(1), N(2), \dots, N(q)\}$.
We can represent the counts of all $k$ counters as vectors in $\mathbf{N}^k$. 

Consider a ``large'' collection of vectors $V \subseteq \mathbf{N}^k$ such that for every pair of counts $u, v \in V$, $u$ and $v$ differ in at least at least 90\% of the coordinates.
Notice that this implies that $u$ and $v$ differ multiplicatively in at least 90\% of the coordinates by definition.
Such a $V$ is equivalent to an error correcting code.
\begin{definition}
An error correcting code $C$ of length $k$ over a finite alphabet $\Sigma$ is a subset of $\Sigma^k$. The elements of $C$ are called \emph{code words}.
The distance of the code $C$, denoted $\Delta(C)$, is defined as the minimum hamming distance between any two code words $c_1, c_2 \in C$, i.e.
\[
\Delta(C) := \min_{\substack{c, c' \in C \\ c\neq c'} }\Delta(c, c'),
\]
where $\Delta(c, c') := |\{i : c_i \neq c_i' \}|$ is the hamming distance between two vectors.
\end{definition}
In the language of error correcting codes, we want our collection of counts $V$ to be a a large error correcting code with a minimum distance of $0.9k$.
Fortunately, the Gilbert-Varshamov bound immediately implies the existence of such a $V$ that is large enough for our purposes.
\begin{lemma}[Gilbert-Varshamov bound]
For any alphabet size $q > 1$, code length $k$ and distance $d \le k$, there exists an error correcting code $C$ with size,
\[
|C| \ge \frac{q^k}{\sum_{i=0}^{d-1}\binom{k}{i}(q-1)^i}.
\]
Consequently, for $d = 0.9 k$ and any $q$ larger than a universal constant, there is a code $C$ with size $|C| \ge q^{0.05k}$.
\end{lemma}
Thus, when $q$ is a large enough constant (which can be achieved by making for $\eps$ smaller than some universal constant $c$), the Gilbert-Varshamov bound tells us there is a choice of $V$ such that $|V| \ge q^{0.05k}$.
Fix a $(k, \eps, \delta)$-approximate counter $\cA$.
For any $v \in V$ and $i \in [k]$, $\cA$ must accurately approximate $v_i$ with probability at least $1-\delta$.
Denote the event that $\cA$ correctly approximates $v_i$ by $E_{v,i}$.
Since $\bE[\sum_{i=1}^k\mathbbm{1}(E^{c}_{v,i})] \le k \cdot \delta < k/20$, by Markov's inequality we can conclude the existence of a subset $I_v \subseteq [k]$ such that:
\begin{enumerate}
    \item $|I_v| \ge 0.9 k$,
    \item $\bP[\cap_{i \in I_v} E_{v,i} ] > 1/2$. Put in words, with probability at least $1/2$, for \emph{every} $i \in I_v$, the algorithm $\cA$ outputs a $(1 + \eps)$-approximation for $v_i$.
\end{enumerate}
Define the event $E_v := \cap_{i \in I_v} E_{v,i}$.
Since $\bE\left[ \sum_{v \in V}\mathbbm{1}(E^{c}_{v}) \right] \le |V|/2$, a standard averaging argument implies the existence of a fixed choice for the random seed of $\cA$ and a subset of counts $V' \subseteq V$ such that:
\begin{enumerate}
    \item $|V'| > |V|/2$,
    \item and the algorithm is correct on all $v \in V'$ in the sense of the event $E_{v}$ on this random seed.
\end{enumerate}
Fix such a random seed and denote the now deterministic algorithm $\cA'$.
For two counts $u, v \in V'$ define the set $D_{u,v} := \{i \in [k] : u_i \neq v_i\}$.
By construction we have that $|D_{u,v}| \ge 0.9k$.
We have
\begin{align*}
    |D_{u,v} \cap I_u \cap I_v| &\ge k - |D_{u,v}^c| - |I_u^c| - |I_v^c| \\
    &\ge k - 0.3k = 0.7k > 1.
\end{align*}
Thus, there is at least one index $i^* \in [k]$ such that $u_{i^*} \neq v_{i^*}$ \emph{and} $\cA'$ $(1+\eps)$-approximates both $u_{i^*}$ and $v_{i^*}$, so $\cA'$ arrives at a different state for $u$ and $v$.
Since this holds for every pair in $V'$, we can conclude that $\cA'$ must arrive at a different state for every count in $V'$.
We can now conclude that
\[
2^{|M|} \ge |V'| \ge 0.5\cdot(\lfloor (4\eps)^{-1} \cdot \ln(1+n\eps/k) \rfloor)^{0.05k} = \left(\Omega\left(\frac{\ln(1+n\eps/k)}{\eps}\right)\right)^{0.05k},
\]
or 
\begin{align*}
|M| \ge 0.05k\log\left(\frac{\ln(1+n\eps/k)}{\eps}\right) - O(k). 
\end{align*}
We distinguish into three cases: 
\begin{enumerate}
    \item If $\eps < k/n$, then 
\[
\frac{\ln(1+n\eps/k)}{\eps} = \Omega(n/k),
\]
which implies
\[
|M| \ge 0.05 k\log(n/k) - O(k).
\]
\item If $k/n \le \eps < \sqrt{k/n}$, we have 
\[
|M| \ge 0.05 k\log(1/\eps) - O(k) \ge 0.05 k\log(1/\eps) + k\log\log(n/k) - O(k\log\log(1/\eps)).
\]
\item If $\eps > \sqrt{k/n}$, we have
\[
|M| \ge 0.05  k\log(1/\eps) + 0.05  k\log\log(\eps n/k) - O(k) \ge 0.05k\log(1/\eps) + 0.05k\log\log(n/k) - O(k).
\]
\end{enumerate}

Putting all the above bounds together gives us
\begin{align*}
|M| &\ge \min\{0.05k\log (n/k) - O(k), 0.05k\log(1/\eps) + 0.025k\log\log(n/k) - O(k\log\log(1/\eps))\} \\
&= \Omega(\min\{k\log (n/k), k\log(1/\eps) + k\log\log(n/k)\}).
\end{align*}
\end{proof}

\section{Upper bounds}

We now state matching upper bounds $(k,\epsilon,\delta)$-approximate counting that follow immediately from the single counter case.
We give upper bounds in two regimes: the case $k \le N$, and the case $k > N$. We start by analyzing the first case.

Recall that the space usage of an approximate counter is typically a random variable, and the goal is to then prove an upper bound on the {\it expected} space (or, say, a high probability bound). The work of Nelson and Yu \cite{NelsonY22} provided the following bound on expected space usage for a single counter:

\begin{theorem}[{\cite{NelsonY22}}]\label{thm:ny22}
For any $\epsilon,\delta\in(0,1/2)$, there is an $(\eps,\delta)$-approximate counter with  expected space usage $O(\log\log N + \log\log(1/\delta) + \log(1/\epsilon))$ bits.
\end{theorem}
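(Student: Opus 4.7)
The plan is to build on the classical Morris counter and then apply a careful amplification step to convert the naive $\log(1/\delta)$ dependence into $\log\log(1/\delta)$. First I would set up a Morris counter with base $1+a$ for $a = \Theta(\epsilon^2)$: maintain an integer state $X$ (initially $0$) and on each increment set $X \leftarrow X+1$ with probability $(1+a)^{-X}$, then output $\hat N = ((1+a)^X - 1)/a$ at query time. A standard induction gives $\mathbb{E}[(1+a)^X] = 1 + aN$ and $\mathrm{Var}[(1+a)^X] = O(aN^2)$, so Chebyshev produces a $(1\pm\epsilon)$ multiplicative estimate with constant failure probability. Since $X = O(\log_{1+a}N)$ with high probability, a prefix-free encoding of $X$ already uses $O(\log\log N + \log(1/\epsilon))$ bits in expectation.

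The main obstacle is the $\delta$-amplification: naively, boosting from constant to $1-\delta$ success probability by running independent copies and taking a median multiplies the space by $\log(1/\delta)$, which would fall short of the stated bound. The Nelson--Yu improvement exploits the observation that the order of magnitude of the exponent $X$ is essentially shared across independent copies, while only low-order ``refinement'' bits need to differ. Concretely, I would run $k = O(\log\log(1/\delta))$ copies that share a single coarse estimate of $\lfloor \log X \rfloor$ but each maintain an independent refinement of width $O(\log(1/\epsilon))$. Each refinement, being a bounded deviation from the shared coarse estimate, contributes only $O(1)$ expected bits beyond the $O(\log(1/\epsilon))$ baseline, so the total expected state size is $O(\log\log N + \log(1/\epsilon) + \log\log(1/\delta))$.

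The delicate step is verifying that the median over the $k$ refinements still enjoys the required Chernoff-style concentration despite the shared coarse estimate, i.e.\ that conditioning on the coarse value does not spoil the independence of the refinement errors. Here I would argue that after conditioning on the shared coarse exponent, each refinement is an independent $\pm O(\epsilon)$-accurate estimate with constant success probability, so a standard Chernoff bound over the median yields failure probability $2^{-\Omega(k)} = \delta$ for the claimed $k$. Once this is established, the remaining work is bookkeeping: check that the query estimator aggregates the shared coarse value and the median refinement into a single $(1\pm\epsilon)$ output, and that the expected encoded lengths sum to the claimed bound. This conditional-independence argument is where I expect the bulk of the technical work to lie, since one has to be careful that the update rule for the coarse part does not leak information that invalidates the independence assumed for the refinements.
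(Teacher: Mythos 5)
The statement you're proving is cited in this paper from Nelson--Yu~\cite{NelsonY22} and is not reproved here, so I'll compare your sketch against the actual argument in that reference.

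There is a genuine gap in your amplification step. You propose running $k = O(\log\log(1/\delta))$ copies (sharing a coarse exponent, each with an independent refinement) and taking a median, then claim ``a standard Chernoff bound over the median yields failure probability $2^{-\Omega(k)} = \delta$ for the claimed $k$.'' But $2^{-\Omega(k)}$ with $k = \Theta(\log\log(1/\delta))$ is only about $1/\mathrm{poly}(\log(1/\delta))$, not $\delta$: the median of independent constant-success estimators needs $\Theta(\log(1/\delta))$ copies to drive the failure probability down to $\delta$. So even granting the ``shared coarse part, cheap refinements'' idea, your parameter setting simply doesn't reach the target failure probability, and if you raise $k$ to $\Theta(\log(1/\delta))$ the refinement bits alone contribute $\Theta(\log(1/\delta) \cdot \log(1/\epsilon))$ expected bits, overshooting the bound. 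The confusion between $\log(1/\delta)$ and $\log\log(1/\delta)$ is load-bearing and breaks the argument.

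The actual Nelson--Yu approach takes a genuinely different route: it runs a \emph{single} Morris-style counter with a smaller increment probability, choosing the base $1+a$ with $a = \Theta(\epsilon^2/\log(1/\delta))$, and proves directly that this one counter achieves a $(1 \pm \epsilon)$ approximation with failure probability $\delta$ via a sharper tail bound (the estimator has much better than Chebyshev tails, exploiting the exponential-moment structure of the Morris update). No median/amplification is used. The state $X$ then satisfies $X = O(\log_{1+a}(1+aN)) = O(\log(N)\,\log(1/\delta)/\epsilon^2)$ with high probability, so a prefix-free encoding of $X$ costs $O(\log X) = O(\log\log N + \log\log(1/\delta) + \log(1/\epsilon))$ expected bits. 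In short, the $\log\log(1/\delta)$ does not arise from a cheap median over $\log\log(1/\delta)$ copies; it arises because $\log(1/\delta)$ enters only \emph{inside} the state magnitude $X$, and the encoding cost is logarithmic in $X$. To close the gap you should replace your median amplification with a direct concentration argument for the single counter (a moment-generating-function or sub-exponential tail bound showing the error exceeds $\epsilon$ with probability at most $\delta$ once $a \lesssim \epsilon^2/\log(1/\delta)$).
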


The following is then a very simple corollary of Theorem~\ref{thm:ny22}.

\begin{corollary}\label{cor:ub}
For any $\epsilon,\delta\in(0,1/2)$ and $1\le k\le N$, there is a $(k,\eps,\delta)$-approximate counter that uses $O(k(\log\log(2N/k) + \log\log(1/\delta) + \log(1/\epsilon)))$ bits in expectation. 
\end{corollary}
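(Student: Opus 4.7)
The plan is to instantiate $k$ independent copies of the single-counter algorithm from Theorem~\ref{thm:ny22}, one per counter, and route each $\texttt{increment}(i)$ and $\texttt{query}(i)$ to the $i$th copy. The correctness guarantee $\bP(|\hat N_i - N_i|\ge \eps N_i)\le \delta$ for each $i$ is then immediate from the single-counter guarantee, since the randomness used by different copies is independent and the definition of a $(k,\eps,\delta)$-approximate counter only requires per-query correctness (no joint guarantee across counters is demanded).

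For the space bound, I would fix any moment in time, let $N_1,\ldots,N_k\ge 0$ denote the current counter values (which by assumption satisfy $\sum_i N_i\le N$), and invoke Theorem~\ref{thm:ny22} once per counter to bound the expected space of the $i$th copy by $C(\log\log(2+N_i) + \log\log(1/\delta) + \log(1/\eps))$ for some absolute constant $C$. The shift by $2$ is just to avoid the singularity of $\log\log$ at the corner cases $N_i\in\{0,1\}$; Theorem~\ref{thm:ny22} is stated for an upper bound on the counter value, but since that bound holds pointwise in the counter value one may substitute $2+N_i$ for each $i$ individually. Summing over $i$ and applying linearity of expectation reduces the problem to bounding $\sum_{i=1}^k \log\log(2+N_i)$ subject to $\sum_i N_i\le N$.

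The key -- and essentially only -- remaining step is a Jensen-type inequality. The function $g(x)=\log\log(2+x)$ is concave on $[0,\infty)$, as a short computation of $g''$ confirms, so Jensen's inequality gives
\[
  \sum_{i=1}^k \log\log(2+N_i) \le k\log\log\Bigl(2+\frac{1}{k}\sum_{i=1}^k N_i\Bigr) \le k\log\log(2+N/k) = O\bigl(k\log\log(2N/k)\bigr),
\]
where in the last equality I use $k\le N$ so that $2+N/k\le 3N/k$. Combining this with the per-counter bound yields the advertised $O(k(\log\log(2N/k)+\log\log(1/\delta)+\log(1/\eps)))$ expected space at the chosen time; since this is uniform over time, the same bound holds throughout the execution. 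The main ``obstacle'' -- which is genuinely mild -- is simply verifying the concavity of $\log\log(2+\cdot)$; the rest is bookkeeping via linearity of expectation and the single-counter black box.
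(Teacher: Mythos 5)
Your proposal is correct and essentially identical to the paper's proof: both instantiate $k$ independent copies of the single-counter algorithm from Theorem~\ref{thm:ny22}, sum the per-counter expected space via linearity of expectation, and close with Jensen's inequality on the concave $\log\log$ function. The only minor difference is bookkeeping for small counter values -- you shift to $\log\log(2+N_i)$ to avoid the singularity at $N_i\in\{0,1\}$, whereas the paper applies Jensen to $\log\log(N_i)$ directly and then injects a constant $4$ into the final iterated logarithm; your variant is arguably a bit cleaner on this corner case but the substance is the same.
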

\begin{proof}
We simply instantiate $k$ independent copies of the data structure from Theorem~\ref{thm:ny22} to provide $k$ independent approximate counters, one per actual counter. For each $1\le i\le k$, let $S_i$ denote the (random) number of bits of memory used to store the approximate counter representing the $i$th counter $N_i$ and recall $N := \sum_i N_i$.
Then the total expected space is
\begin{align*}
    \E\left[\sum_{i=1}^k S_i\right] &= \sum_{i=1}^k \E [S_i]\\
    &\le C \sum_{i=1}^k (\log\log(N_i) + \log\log(1/\delta) + \log(1/\epsilon)) && \text{(Theorem~\ref{thm:ny22})}\\
    {}&= Ck(\log\log(1/\delta) + \log(1/\epsilon)) + \sum_{i=1}^k \log\log(N_i)\\
    {}&\le C k(\log\log(4N/k) + \log\log(1/\delta) + \log(1/\epsilon)) , && \text{(Jensen)} 
\end{align*}
where the last inequality uses concavity of the function $x\in (1,\infty) \mapsto \log\log(x)$. Note that we inject a constant $4$ in the iterated logarithm so that the $\log\log$ term stays nonnegative.
\end{proof}

We now turn to the case $k > N$. In this case, there is a clear lower bound of $\Omega(N\log(Ck/N))$ bits, since $\log(\sum_{j=1}^N \binom k j)$ bits of memory are needed to simply remember which counters are non-zero, and the logarithm of this sum is $\Theta(N\log(Ck/N))$ \cite[Exercise 9.42]{GrahamKP94}. We claim that this is also an upper bound. Specifically, we can use $O(N\log(Ck/N))$ bits to remember the locations of the $t\le N$ non-zero counters. Now we have reduced to the case $k=t \le N$ and can apply Corollary~\ref{cor:ub} to approximately remember their values.

\section*{Acknowledgements}
We thank Sidhanth Mohanty for very enlightening discussions on unpredictable paths, half Cauchy random variables, and stochastic processes in general that ultimately led to the discovery of the first version of our main lower bound.
We also thank Mark Sellke for answering a certain question regarding stochastic processes. Lastly, we thank Greg Valiant for raising the question of the amortized space complexity of approximate counting.


\begin{thebibliography}{BYJKS04}

\bibitem[BGW20]{BravermanGW20}
Mark Braverman, Sumegha Garg, and David~P. Woodruff.
\newblock The coin problem with applications to data streams.
\newblock In {\em Proceedings of the 61st {IEEE} Annual Symposium on
  Foundations of Computer Science (FOCS)}, pages 318--329, 2020.

\bibitem[BYJKS04]{BarYossefJKS04}
Ziv Bar-Yossef, T.~S. Jayram, Ravi Kumar, and D.~Sivakumar.
\newblock An information statistics approach to data stream and communication
  complexity.
\newblock {\em J. Comput. Syst. Sci.}, 68(4):702--732, 2004.

\bibitem[CSWY01]{ChakrabartiSWY01}
Amit Chakrabarti, Yaoyun Shi, Anthony Wirth, and Andrew~Chi{-}Chih Yao.
\newblock Informational complexity and the direct sum problem for simultaneous
  message complexity.
\newblock In {\em Proceedings of the 42nd Annual Symposium on Foundations of
  Computer Science (FOCS)}, pages 270--278, 2001.

\bibitem[CT06]{CoverT06}
Thomas~M. Cover and Joy~A. Thomas.
\newblock {\em Elements of information theory}.
\newblock Wiley, 2 edition, 2006.

\bibitem[GKP94]{GrahamKP94}
Ronald~L. Graham, Donald~E. Knuth, and Oren Patashnik.
\newblock {\em Concrete Mathematics: {A} Foundation for Computer Science, 2nd
  Ed}.
\newblock Addison-Wesley, 1994.

\bibitem[Kol18]{Kolevska18}
Elena Kolevska.
\newblock What happens with {Redis} runs out of memory, December 2018.
\newblock \url{https://www.youtube.com/watch?v=Xjq5XL2u3po}.

\bibitem[Lum18]{Lumbroso18}
J\'{e}r\'{e}mie Lumbroso.
\newblock The story of {HyperLogLog}: How {Flajolet} processed streams with
  coin flips.
\newblock {\em CoRR}, abs/1805.00612, 2018.

\bibitem[Mor78]{Morris78}
Robert~H. Morris.
\newblock Counting large numbers of events in small registers.
\newblock {\em Commun. {ACM}}, 21(10):840--842, 1978.

\bibitem[NY22]{NelsonY22}
Jelani Nelson and Huacheng Yu.
\newblock Optimal bounds for approximate counting.
\newblock In {\em Proceedings of the 41st ACM International Conference on
  Principles of Database Systems (PODS)}, pages 119--127, 2022.

\bibitem[Sip97]{Sipser97}
Michael Sipser.
\newblock {\em Introduction to the theory of computation}.
\newblock {PWS} Publishing Company, 1997.

\end{thebibliography}
\end{document}